\newtheorem{lemma}{Lemma}
\newtheorem{definition}{Definition}
\newtheorem{corollary}{Corollary}
\newtheorem{theorem}{Theorem}
\newcommand{\reddit}[1]{\texttt{#1}}
\newcommand{\adv}{\ensuremath{\mathsf{Adv}}}
\newcommand\delaytext@errorcmd{\@latex@error}
\newcommand\delaytext[2]{%
  \long\expandafter\gdef\csname delaytext:#1\endcsname{#2}%
  \AtEndDocument{%
    \expandafter\ifx\csname delaytext-used:#1\endcsname\relax
    \delaytext@errorcmd{delaytext #1 unused. Use \noexpand\usedelayedtext{#1}
      somewhere.}\fi}%
}
\newcommand\usedelayedtext[1]{%
  \dropdelayedtext{#1}%
  \expandafter\ifx\csname delaytext:#1\endcsname\relax
  \delaytext@errorcmd{delaytext #1 undefined.}\fi
  \@nameuse{delaytext:#1}%
}
\newcommand\dropdelayedtext[1]{%
  \expandafter\ifx\csname delaytext-used:#1\endcsname\relax\else
  \delaytext@errorcmd{delaytext #1 used twice.}\fi
  \global\@namedef{delaytext-used:#1}{1}%
  \expandafter\ifx\csname delaytext:#1\endcsname\relax
  \@latex@error{delaytext #1 undefined}\fi
}
\newcommand\delaytextwarnonly{%
  \let\delaytext@errorcmd\@latex@warning
}
\newcommand{\mathcmd}[1]{\ensuremath{#1}\xspace}
\newcommand\tableofcontentsapp{%
   { 
     % \medskip ~~\large \textbf{ \contentsname } \vskip 0.0 em 
   }
   {
     \@starttoc{mytoc}%
     \bigskip\bigskip
   }
}
\newcommand{\addsubsection}[2]{
  \addtocontents{mytoc}{\iftrue\vskip 0.1em\fi}
  \addcontentsline{mytoc}{subsection}{\ref{#1} \quad #2}
}
\newcounter{myExampleCounter}
\newenvironment{framedFigure}[2]
{
\begin{figure}[t]
\def\tmpCaption{#1}
\def\tmpLabel{#2}
\begin{framed}
}
{
\vspace{-0.5em}
\end{framed}
\vspace{-0.5em}
\caption{\tmpCaption}\label{\tmpLabel}
\end{figure}
}
\newcommand{\mathstring}[1]{\mathcmd{\mathsf{#1}}}
\newcommand{\Adv}{\mathcmd{\mathstring{Adv}}}
\renewcommand{\paragraph}[1]{\smallskip\noindent\textbf{#1}\hspace{1ex}}
\newcommand{\Obs}{\mathcmd{\mathcal{O}}}
\newcommand{\vocabulary}{\mathcmd{\mathcal{V}}}
\newcommand{\attribute}{\mathcmd{\alpha}}
\newcommand{\Attributes}{\mathcmd{\mathcal{A}}}
\newcommand{\entity}{\mathcmd{\epsilon}}
\newcommand{\Entities}{\mathcmd{\mathcal{E}}}
\newcommand{\Know}{\mathcmd{\kappa}}
\newcommand{\Profiles}{\mathcmd{\mathcal{P}}}
\newcommand{\stringCount}{\mathcmd{\mathsf{count}}}
\renewcommand{\div}{\mathcmd{D}}
\newcommand{\dist}{\mathcmd{\mathsf{dist}}}
\renewcommand{\adv}{\mathcmd{\mathsf{Adv}}}
\newcommand{\Policy}{\mathcmd{\mathcal{R}}}
\newcommand{\policy}{r}
\newcommand{\dom}{\mathcmd{\mathsf{dom}}}
\title{From Closed-world Enforcement to Open-world Assessment of Privacy}
\date{}
\author{Michael Backes$^{1,2}$ \hspace{3em} Pascal Berrang$^1$ \hspace{3em} Praveen Manoharan$^1$\\
~\\
~$^1$CISPA, Saarland University\quad ~$^2$ MPI-SWS\\
~\\
backes@cs.uni-saarland.de\\
berrang@cs.uni-saarland.de\\
manoharan@cs.uni-saarland.de
}
\begin{document}
\pagestyle{plain}

\maketitle
\begin{abstract}
In this paper, we develop a \emph{user-centric privacy framework for quantitatively assessing the exposure} of personal information 
in open settings. 
%VAR1: This comprises the formalization of information found in heterogeneous data through statistical data models, the user-specific definition of privacy goals through privacy policies, as well as a user- and context-dependent classification of information sensitivity. 
%VAR2: 
Our formalization addresses key-challenges posed by such open settings, such as the unstructured dissemination of heterogeneous information and the necessity of user- and context-dependent privacy requirements. We propose a new definition of information sensitivity derived from our formalization of privacy requirements, and, as a sanity check, show that hard non-disclosure guarantees are impossible to achieve in open settings. 
%reinterpretation of statistical language models that are predominantly used in the information retrieval (IR) community to characterize documents with regard to their
%information content, and it explicitly leverages ideas from IR to cope with arbitrary (unstructured, heterogeneous) 
%data in dynamically changing contexts.
%We formally show within 
%this model that strong guarantees in the sense of Differential Privacy are impossible to achieve in open settings.

After that, we provide an instantiation of our framework to address the \emph{identity disclosure} problem, leading to the 
novel notion of $d$-convergence.
% for assessing the anonymity in open settings based on indistinguishability of entities. 
$d$-convergence is based  on indistinguishability of entities and it bounds the likelihood with which an adversary successfully links two profiles of 
the same user across online communities.
%, and that applying $d$-convergence to the closed setting of statistical databases 
%implies $d$-closeness of the database. 

Finally, we provide a \emph{large-scale evaluation} of our framework on a collection of 15 million comments collected from the Online Social Network Reddit. 
Our evaluation validates the notion of $d$-convergence for assessing the linkability of entities in our 
data set and provides deeper insights into the data set's structure.
\end{abstract}
\newpage
%%%%%% make a proper TOC despite llncs
\setcounter{tocdepth}{2}
\makeatletter	
\makeatletter
\tableofcontents
\newpage

%%%%%% Content
%!TEX root = main.tex
\section{Introduction}\label{section:introduction}
The Internet has undergone dramatic changes in the last two decades,
evolving from a mere communication network to a global multimedia
platform in which billions of users not only actively exchange
information, but increasingly conduct sizable parts of their daily lives. 
While this transformation has brought tremendous benefits
to society, it has also created new threats to online privacy that
existing technology is failing to keep pace with. 
%In fact, protecting
%privacy in the Internet remains a widely unsolved challenge for users,
%providers, and legislators alike. 
Users tend to reveal personal
information without considering the widespread, easy accessibility,
potential linkage and permanent nature of online data. Many cases
reported in the press show the resulting risks, which range from
public embarrassment and loss of prospective opportunities (e.g., when
applying for jobs or insurance), to personal safety and property risks
(e.g., when sexual offenders or burglars learn users' whereabouts online).
The resulting privacy awareness and privacy concerns of Internet users have been further amplified
by the advent of the Big-Data paradigm and the aligned business models
of personalized tracking and monetizing personal information in an unprecedented manner.

Developing a suitable methodology to reason about 
%the identifiability 
the privacy of users in such a large-scale, open web setting, as well 
as corresponding tool support in the next step, requires at its core a formal privacy model 
%for assessing to what extent a user is disseminating private information on the Internet. 
%Any adequate privacy model needs to 
that lives up to the now increasingly dynamic dissemination of unstructured, heterogeneous user content on the Internet:
While users traditionally shared information mostly using public profiles with static information about themselves, nowadays they
disseminate personal information in an unstructured, highly dynamic manner, through content they 
create and share (such as blog entries, user comments,  a ``Like'' on Facebook), or through the people 
they befriend 
or follow. 
Furthermore, ubiquitously available background knowledge about a dedicated user needs to be appropriately reflected within the
model and its reasoning tasks, as it can decrease a user's privacy by inferring further sensitive information. 
As an example, Machine Learning and other Information Retrieval techniques provide comprehensive approaches for profiling a user's actions
across multiple Online Social Networks, up to a unique identification of a given user's profiles for each such network. 

%Differential Privacy avoids this problem since it provides a very strong notion of privacy, which, however, often leads to utility problems with differentially private datasets~\cite{}, .

Prior research on privacy has traditionally focused on closed database settings -- characterized by a complete view on 
structured data and a clear distinction of key- and sensitive attributes --
and has aimed for  strong privacy guarantees using global data sanitization.
These approaches, however, are inherently inadequate if such closed settings are replaced by open settings as described above, where 
unstructured and heterogeneous data is being disseminated, where individuals have a partial 
view of the available information, and where global data sanitization is impossible and hence 
strong guarantees have to be replaced by probabilistic privacy assessments.
%Legislators have
%responded by tightening privacy regulations (e.g., mandating user control of their personal information),
%but users and providers arguably lack the technology to even reason about privacy in Big-Data, let alone to fully comply with such regulations.
%[Paragraph 2]

%[Paragraph 3]
As of now, \emph{even the basic methodology is missing} for offering users
technical means to comprehensively
assess the privacy risks incurred by their data dissemination, and their daily online activities in general. Existing privacy models such as $k$-anonymity~\cite{Sweeney02}, $l$-diversity~\cite{Machanavajjhala07}, $t$-closeness~\cite{Li07} and 
the currently most popular notion of Differential Privacy~\cite{Dwork08} follow a database-centric approach that is inadequate to
meet the requirements outlined above. We refer the reader to Section~\ref{section:general_idea} for further discussions on existing privacy models.

%[Paragraph 4]
%Information dissemination in the Internet, however, can not always be captured in terms of statistical databases~\cite{Narayanan09}.  

%where the data collection and presentation is not necessarily database-like and also does not facilitate a differentiation between key- and sensitive attributes as is done in database privacy. 
\subsection{Contribution}

In this paper, we present a rigorous methodology for quantitatively assessing the exposure of personal information 
in open settings. Concretely, the paper makes the following
three tangible contributions: 
(1) a formal framework for reasoning about the disclosure of personal information in open settings, 
(2) an instantiation of the framework for reasoning about the identity disclosure problem, 
and (3) an evaluation of the framework on a collection of $15$ million comments collected 
from the Online Social Network Reddit.

\medskip
\noindent
\emph{A Formal Framework for Privacy in Open Settings.} 
%We develop a formal framework for reasoning about privacy in open settings such as the Internet. 
We propose a novel framework for addressing the essential challenges of privacy in open settings, such as providing 
a data model that is suited for dealing with unstructured dissemination of heterogeneous
information through various different sources and a flexible definition of user-specific 
privacy requirements that allow for the specification of context-dependent privacy goals.
In contrast to most existing approaches, our framework strives to assess the degree of 
exposure individuals face, in contrast to trying to enforce an individual's privacy requirements. 
Moreover, our framework technically does not differentiate 
between non-sensitive and sensitive attributes a-priori, but rather starts 
from the assumption that all data is equally important and can lead to privacy risks.
More specifically, our model captures
the fact that the sensitivity of attributes 
is highly user- and context-dependent by 
deriving information sensitivity from each user's 
privacy requirements. As a sanity check we prove that hard
non-disclosure guarantees cannot be provided for the open setting
in general, providing incentive for novel approaches for assessing 
privacy risks in the open settings.

\medskip
\noindent
\emph{Reasoning about Identity Disclosure in Open Settings.} 
We then instantiate our general privacy framework for the specific use case of identity disclosure. Our framework
 defines and assesses identity disclosure (i.e., identifiability and linkability of identities) 
by utilizing entity similarity, i.e., an entity is private in a collection of 
entities if it is sufficiently similar to its peers.
%The model exhibits entity similarity and indistinguishability as its main concepts for defining and quantifying privacy: an entity is private in a collection of entities, if it is suitably indistinguishable from its peers. 
% Formally, this intuition is captured by defining corresponding statistical models that allow us to characterize entities based on the information they have
% disseminated publicly and based on ubiquitously available background knowledge about these entities. 
At the technical core of our model is the new notion of $d$-convergence, which quantifies the similarity 
of entities within a larger group of entities. It hence provides the formal grounds to assess the ability of any single 
entity to blend into the crowd, i.e., to hide amongst peers.
%We furthermore show how to leverage our model
%for identifying context-specific, privacy-critical attributes.  
%that allow for uniquely identifying a specific entity amongst her peers <- we do not show this formally, but just motivate why this might be true
%, and thereby provide a link to the statistical database setting that prior models rely on. 
%We show that our model and its underlying notion
%of $d$-convergence implies existing privacy notions if one considers a setting with \emph{structured} data only: we define
%a suitable transformation of our statistical model to a statistical database and subsequently show that a $d$-convergent 
%database is also $t$-close, and using previous results, therefore is also differentially private. 
The $d$-convergence model is furthermore capable of assessing identity disclosure risks specifically for single entities. 
To this end, we extend the notion of $d$-convergence to the novel notion of $(k,d)$-anonymity, which allows 
for entity-centric identity disclosure risk assessments by requiring $d$-convergence in the local neighborhood of a given entity. 
Intuitively, this new notion provides a generalization of $k$-anonymity 
that is not bound to matching identities based on pre-defined key-identifiers.
% Took out the following paragraph since this "contribution" not fully explored in the paper, might add one paragraph in the technical section pointing this fact out.

%Note that our approach can also be utilized for the analysis of other indistinguishability problems in security. For example, our model can assist in anomaly detection, which tries to identify irregular behavior by processes or entities in a secured system, or anonymity analysis in anonymous communication networks, which tries to quantify the anonymity provided by such anonymization technologies.
\medskip
\noindent
\emph{Empirical Evaluation on Reddit.} 
Third, we perform an instantiation of our identity disclosure model for the important use case of analyzing user-generated text content in order to characterize specific user profiles. We use unigram frequencies extracted from user-generated content as user attributes, and we subsequently demonstrate that the
resulting unigram model can indeed be used for quantifying the degree of anonymity of -- and ultimately, for differentiating -- individual entities.
For the sake of exposition, we apply this unigram model to a collection of $15$ million comments collected from the Online Social Network Reddit.
%, which we stripped down to $15$ million comments to keep the evaluation tractable.
The computations were performed on two Dell PowerEdge R820 with 64 virtual cores each at $2.60$GHz over the course of six weeks. Our evaluation shows that $(k,d)$-anonymity suitably assesses an identity's anonymity and provides deeper insights into the data set's structure.

\subsection{Outline}
We begin by discussing related work in Section~\ref{section:related_work} and explain why existing
privacy notions are inadequate for reasoning about privacy in open web settings in Section~\ref{section:open_settings}. 
We then define our privacy framework in Section~\ref{section:general_model} and instantiate it for reasoning about 
identity disclosure in Section~\ref{section:linkability_model}. In Section~\ref{section:experiment} we perform a basic 
evaluation of the identity disclosure model on the Reddit Online Social Network. We summarize our findings~\ref{section:conclusion}.
%!TEX root = main.tex
\section{Related Work}\label{section:related_work}
In this section, we give an overview over other relevant related work that has not yet been considered in the previous subsection.

\paragraph{Privacy in Closed-world Settings.}
The notion of privacy has been exhaustively discussed for specific settings such as statistical databases, as well as for more general settings. Since we 
already discussed the notions of $k$-anonymity~\cite{Sweeney02}, $l$-diversity~\cite{Machanavajjhala07} $t$-closeness~\cite{Li07} and 
Differential Privacy~\cite{Dwork08} in Section~\ref{section:general_idea} in great detail, we will now discuss further such notions.
%Zhelava and Getoor~\cite{Zheleva11} present a survey of existing privacy notions and their applicability to different privacy domains. Their differentiation between attribute disclosure and identity disclosure already provided  an intuitive explanation as to why existing privacy notions are inadequate for reasoning about Big-Data privacy in Section~\ref{section:general_idea}. 

A major point of criticism of Differential Privacy, but also the other existing privacy notions, 
found in the literature~\cite{Kifer11} is the (often unclear) trade-off between utility and privacy that is incurred by applying database sanitation techniques to achieve privacy. Several works 
have shown that protection against attribute disclosure cannot be provided in settings that consider an adversary with arbitrary 
auxiliary information~\cite{Dinur03,Dwork06,Dwork08-2}. We later show, as sanity check, that in our formalization of privacy in open settings, general non-disclosure guarantees are indeed impossible to achieve. By providing the necessary formal groundwork in this paper, we hope to stimulate research on \emph{assessing} privacy risks in open settings, against explicitly spelled-out adversary models.  
%Since we do not consider attribute disclosure directly, but focus on identity disclosure (and the thereby fueled attribute 
%disclosure), our privacy notion with the scalable adversary allows us to provide meaningful privacy assessments for specific users even without the necessity to globally sanitize data. This, however, should not be interpreted in that we improve existing notions in their traditionally considered setting, but that we target a different application domain.

Kasiviswanathan and Smith~\cite{Semantic_of_Differential_Privacy} define the notion of $\epsilon$-semantic privacy to capture general non-disclosure guarantees. We define our adversary model in a similar fashion as in their formalization and we use $\epsilon$-semantic privacy to show that general non-disclosure guarantees cannot be meaningfully provided in open settings.

%Several interpretations of the parameter $\epsilon$ in the Definition of Differential Privacy have been proposed in the 
%literature~\cite{hsu14,Jaewoo11}. While most of these 
%interpretations agree that $\epsilon$ indicates the 
%trade-off between the degree of privacy achieved and 
%the utility of the sanitized data, no general consensus on the practical 
%meaning of the value $\epsilon$ has been found yet. The notion of $d$-convergence we introduce to reason about identity disclosure in open settings provides an intuitively understandable anonymity parameter that corresponds to the similarity of considered entities.
Several extensions of the above privacy notions have been proposed in the literature to provide privacy guarantees in use cases that differ from traditional database privacy~\cite{AnoA,Chatzikokolakis13,Zheleva09,Heatherly13,Zhou11,Chen14}. These works aim at suitably transforming different settings into a 
database-like setting that can be analyzed using differential privacy. Such a transformation, however, often abstracts away from essential components of these settings, and as a result achieve impractical privacy guarantees. As explained in Section~\ref{section:general_idea}, the open web setting is particularly ill-suited for such transformations.  
%both to address shortcomings of the previous notions or  
%setting. The main concern of the former works~\cite{MachanavajjhalaGG09,Dwork08} is to reduce the requirements set by the above privacy notions, either by 
%relaxing the requirements to achieve privacy, or by assuming a weaker adversary that allows for more realistic use cases.
%The latter works~\cite{AnoA,Chatzikokolakis13,Zheleva09,Kosinski14,Heatherly13,Zhou11,Chen14} 
%since such an approach would still inherit the shortcomings of e.g. Differential Privacy for the Big-Data setting, as 
%discussed in Section~\ref{section:general_idea}. 

Specifically for the use case in Online Social Networks (in short, OSNs), many works~\cite{Zheleva09,Kosinski14,Heatherly13,Zhou11,Chen14} apply the existing database 
privacy notions for reasoning about attribute disclosure in OSN data. These works generally impose a specific structure on OSN data, 
such as a social link graph, and reason about the disclosure of private attributes through this structure.  
Zhaleva et al.~\cite{Zheleva09} show that mixed public and private profiles do not necessarily protect the private part of a profile since they can be inferred from the public part.
Heatherly et al.~\cite{Heatherly13} show how machine learning techniques can be used to infer private information from publicly available information.
Kosinksi et al.~\cite{Kosinski14} moreover show that machine learning techniques can indeed be used to predict personality traits of users and their online behavior.
Zhou et al.~\cite{Zhou11} apply the notions of $k$-anonymity and $l$-diversity to data protection in OSNs and discuss the complexity of finding private subsets. Their approach does however suffer from the same problems these techniques have in 
traditional statistical data disclosure, where an adversary with auxiliary information can easily infer information about any specific user.
Chen et al.~\cite{Chen14} provide a variation of differential privacy which allows for privacy and protection against edge-disclosure attacks in the correlated setting of OSNs.
The setting, however, remains static, and it is assumed that the data can be globally sanitized in order to provide protection against attribute 
disclosure. Again, as discussed in Section~\ref{section:general_idea}, this does not apply to the open web setting with its highly unstructured dissemination of data.

\paragraph{Statistical Language Models.}
Statistical Language Models for information retrieval have first been introduced by Ponte and Croft~\cite{Ponte98} as an alternative approach for document retrieval and are inspired by language models for Speech Recognition and Natural Language Processing~\cite{Song99,Rosenfeld00}. 
They have subsequently been focus of a long line of research (examples include ~\cite{Lavrenko01,Zhai04,Uzuner05,Zhai08}) that further develop the basic statistical language model approach and its benefits. While Statistical Language Models have not been shown to perform better than other established retrieval methods~\cite{Zhai08}, we found that the Statistical Language Model formulation is closer than other options to what we require in expressing and solving indistinguishability problems that arise in computer security. 
\section{Privacy in Open Settings}\label{section:open_settings}
Before we delve into the technical parts of this paper, we give an informal overview over privacy in the Internet of the future. To this end, we first provide an example that illustrates some of
the aspects of privacy in the Internet, and then in detail discuss the challenges of privacy in the Internet and why existing privacy notions are not applicable to this setting.
%!TEX root = ../main.tex
\subsection{Example}
Consider the following example: Employer Alice receives an application by potential employee Bob which contains personal information about Bob. Before she makes the decision on the employment of Bob, however, she searches the <internet and tries to learn even more about her potential employee. A prime source of information are, for example, Online Social Networks (OSNs) which Alice can browse through. If she manages to identify Bob's profile in such an OSN she can then learn more about Bob by examining the publicly available information of this profile.

In order to correctly identify Bob's profile in an OSN, Alice takes the following approach: 
based on the information found in Bob's application, she constructs a model $\theta_B$ that contains all attributes, such as 
name, education or job history, extracted from Bob's application. She then compares this model $\theta_B$ to the profiles 
$P_1,\ldots, P_n$ found in the OSNs and ranks them by similarity to the model $\theta_B$. Profiles that 
show sufficient similarity to the model $\theta_B$ are then chosen by Alice as belonging to Bob. After identifying the (for 
Alice) correctly matching profiles $P_1^\ast,\ldots, P_i^\ast$ of Bob, Alice can finally merge their models 
$\theta_1^{\ast},\ldots,\theta_i^{\ast}$ with $\theta_B$ to increase her knowledge about Bob.

Bob now faces the problem that Alice could learn information about him that he does not want her to learn. He basically has 
two options: he either does not share this critical information at all, or makes sure that his profile is not identifiable 
as his. In OSNs such as Facebook, where users are required to identify themselves, Bob can only use the first option.
In anonymous or pseudonymous OSNs such as Reddit or Twitter, however, he can make use of the second option. He then has to 
make sure that he does not share enough information on his pseudonymous profiles that would allow Alice to link his 
pseudonymous profile to him personally.

Privacy in the open web is mostly concerned with the second option: we cannot protect an entity \entity against sharing personal 
information through a profile which is already uniquely identified with the entity \entity. We can, however, estimate how well an 
pseudonymous account of \entity can be linked to \entity, and through this link, learn personal information about \entity.
As the example above shows, we can essentially measure privacy in terms of similarity of an entity $\entity$ 
in a collection of entities $\Entities$. 

The identifiability of $\entity$ then substantially depends on the 
attributes $\entity$ exhibits in the context of $\Entities$ 
and does not necessarily follow the concept of personally 
identifiable information (PII) as known in the more common understanding of privacy and in privacy and data-protection 
legislation~\cite{EUPII}: here, privacy protection only goes as far as protecting this so-called personally identifiable 
information, which often is either not exactly defined, or restricted to an a-priori-defined set of attributes such as name, 
Social Security number, etc. We, along with other authors in the literature~\cite{Narayanan09,Narayanan10}, find however that the set of 
critical attributes that need to be protected differ from entity to entity, and from community to community. For example, in 
a community in which all entities have the name ``Bob'', exposing your name does not expose any information about yourself. 
In a different community, however, where everyone has a different name, exposing your name exposes a lot of information 
about yourself.

In terms of the privacy taxonomy formulated by Zheleva and Getoor~\cite{Zheleva11}, the problem we face corresponds to the identity disclosure problem, where one tries to identify whether and how an identity is represented in an OSN. We think that this is one of the main concerns of users of frequently used OSNs, in particular those that allow for pseudonymous interactions: users are able to freely express their opinions in these environments, assuming that their opinions cannot be connected to their real identity. However, any piece of information they share in their interactions can leak personal information that can lead to identity disclosure, defeating the purpose of such pseudonymous services.

To successfully reason about the potential disclosure of sensitive information in such open settings, we first have to consider various challenges that
have not been considered in traditional privacy research. After presenting these challenges, we discuss the implications of these challenges on some of the existing privacy notions, before we consider other relevant related work in the field.
%$In order to 
%achieve these goals, \fname follows an overall workflow for assessing the criticality of a user action that is presented in Figure~\ref{fig:workflow}.
%In what follows, we provide a conceptual overview of the individual steps in this workflow, each of them
%constituting an important challenge on its own:

\subsection{Challenges of Privacy in Open Settings}\label{section:challenges}
In this subsection, we introduce the challenges induced by talking about privacy in open settings:

\medskip
\noindent
\emph{C1) Modeling heterogeneous information.} We require an information model that allows for modeling various types of information and that reflects the heterogeneous information shared throughout the Internet. This models needs to adequately represent personal information that can be inferred from various sources, such as static profile information or from user-generated content, and should allow statistical
assessments about the user, as is usually provided by knowledge inference engines. We propose a solution to this challenge in Section~\ref{section:data_model}.

\medskip
\noindent
\emph{C2) User-specified privacy requirements.} We have to be able to formalize user-specified privacy requirements. This formalization should use 
the previously mentioned information model to be able to cope with heterogeneous information, and specify which information should be protected from being publicly disseminated. We present a formalization of user privacy requirements in Section~\ref{section:policies}. 

\medskip
\noindent
\emph{C3) Information sensitivity.} In open settings, information sensitivity is is a function of user expectations and context: we therefore need to provide new definitions for sensitive information that takes user privacy requirements into account. We present context- and user-specific definitions of information sensitivity in Section~\ref{section:sensitive}.

\medskip
\noindent
\emph{C4) Adversarial knowledge estimation.} To adequately reason about disclosure risks in open settings we also require a parameterized adversary model that we can instantiate with various assumptions on the adversary's knowledge: this knowledge should include the information
disseminated by the user, as well as background knowledge to infer additional information about the user. In Section~\ref{section:general_model}, we define our adversary model
based on statistical inference.

%\begin{figure}
%\begin{framed}
%\vspace{1em}
%\includegraphics[width=\columnwidth]{workflow.png}
%\vspace{0.2em}
%\end{framed}
%\caption{Workflow in \fname with pointers to the sections in which each step is detailed.}
%\label{fig:workflow}
%\end{figure}

In the following sections, we provide a rigorous formalization for these requirements, leading to a formal framework for privacy in open settings. We will instantiate this framework in Section~\ref{section:matching} to reason about the identity disclosure in particular.

%\begin{itemize}
%\item figure to describe the workflow of the privacy advisor
%\item review of general information model
%\item privacy policies described by the user, show that these are enforcable by a system such as the privacy advisor
%\item framework for modeling adversarial knowledge acquisition
%\item framework for assessing criticality of user actions with regard to privacy policies defined by the user
%\item summary, review of further requirements
%\end{itemize}  
%!TEX root = main.tex
We begin by discussing why existing privacy notions are not suited for reasoning about privacy in open settings. Afterwards, we provide an overview over further related work.
\subsection{Inadequacy of Existing Models}\label{section:general_idea}
%better start with this paragraph
%We now discuss the (in-)adequacy of the most 
Common existing privacy notions 
%to reason about privacy in open settings:
such as $k$-anonymity~\cite{Sweeney02}, $l$-diversity~\cite{Machanavajjhala07}, $t$-closeness~\cite{Li07} and the currently most 
popular notion of Differential Privacy~\cite{Dwork08} 
%All of these models 
provide the technical means for privacy-friendly data-publishing in a closed-world setting: 
They target scenarios in which all data is available from the 
beginning, from a single data source, remains static and 
is globally sanitized in order to provide rigorous privacy guarantees. In what follows,
we describe how these notions fail to adequately address the challenges of
privacy in open settings discussed above.
%(To anticipate the conclusion: all of these notions have been designed for and excel in protecting
%attribute disclose of individuals in static contexts, but they are not designed to reason about identity disclosure on the Internet; see below.) 

\medskip
\noindent
\emph{a) Absence of structure and classification of data.} All the aforementioned privacy models
require an a-priori structure and classification of the data under consideration. Any
information gathered about an individual thus has to be embedded in this structure, or it cannot be seamlessly integrated in these models.
% This contradicts the Big-Data scenario described before, in which users disseminate unstructured, heterogeneous data in a
% highly dynamic manner.  

\medskip
\noindent
\emph{b) No differentiation of attributes.}
%Second, all these models typically require an additional differentiation between key attributes that identify an individual record, and sensitive attributes that need to be protected. This again contradicts the absence of an a-priori, static structure in our setting. Moreover, as pointed out in the literature~\cite{Narayanan09}, such a differentiation cannot be made a-priori in general; morever, it would be highly context-sensitive in the Big-Data setting.
All of these models except for Differential Privacy require an additional differentiation between key attributes that identify an individual 
record, and sensitive attributes that a users seeks to protect. This again contradicts the absence of an a-priori, static 
structure in our setting. Moreover, as pointed out above and in the literature~\cite{Narayanan09}, such a differentiation cannot be 
made a-priori in general, and it would be highly context-sensitive in the open web setting.
%  for example, information about 
% religious views, gender or health would be commonly classified as sensitive. However, a Christian would be unlikely to consider 
% it sensitive to openly spell out his religious views while participating in a Christian community; however, the same user might consider this information sensitive in the context of a political discussion. 

%Third, some of these models do not consider adversaries that utilize ubiquitously available background knowledge about a target user to infer additional sensitive information, with a common example being openly available statistical information. 

\medskip
\noindent
\emph{c) Ubiquitously available background knowledge.}
All of these models, except for Differential Privacy, do not take into account adversaries that utilize
ubiquitously available background knowledge about a target user to infer additional sensitive information. A common example 
of background knowledge is openly available statistical information that allows the adversary to infer additional information 
about an identity. 
% e.g., one can infer with significant probability that 
% a target user is indeed a Christian, if one already knows that he is male and was born in Western Europe (given that 
% Christianity is the majority religion there), and if additionally no conflicting information can be found or deduced.

\medskip
\noindent
\emph{d) Privacy for individual users.}
All these models provide privacy for the whole dataset, which clearly implies privacy of every single user. 
One of the major challenges in open settings such as the Internet, however, is that accessing
and sanitizing all available data is impossible.
% relevant data might be inaccessible to the analysis, it can change highly dynamically, or the amount of data may simply be too large and unwieldy. Even for a closed environment such as a single Online Social Network that consists of a collection of users, collecting all information of all users
% within the OSN is typically impossible because of access constraints and legal restrictions. 
This leads to the requirement to design a local privacy notion that provides a lower privacy bound for 
every individual user, even if we only have partial access to the available data. 
%While some of the existing privacy notions already have been applied to provide local guarantees (e.g., in spatial cloaking),
%we believe that they can easily be generalized to provide such local privacy guarantees also in different contexts.

\medskip
\noindent
The notion of Differential Privacy only fails to address some of the aforementioned requirements 
(parts \emph{a} and \emph{d}), but it comes with the additional assumption that
the adversary knows almost everything about the data set in question (everything
except for the information in one database entry). This assumption enables Differential Privacy to avoid differentiation between key attributes and sensitive attributes. This strong adversarial model, however, implies that privacy guarantees are only achievable if the considered data is globally perturbed~\cite{Dinur03,Dwork06,Dwork08-2}, which is not possible in open web settings. 

The conceptual reason for the inadequacy of existing models for reasoning about privacy in open web settings is mostly their design goal:
%  can also  be explained using the privacy taxonomy of Zheleva and Getoor~\cite{Zheleva11}. There, they formulate two major privacy problems: the first one concerns identity disclosure where a given user's profile is identified from a set of (anonymous) profiles and subsequently linked to the said user. The second one concerns the problem of attribute disclosure in which an
% adversary tries to infer additional information about a specific user.
Privacy models have thus far mainly been concerned with the problem of attribute disclosure within a single data source: protection against identity disclosure was then
attempted by preventing the disclosure of any (sensitive) attributes of a user to the public. In contrast to static settings such as private data publishing, where we can decide which information will be disclosed to the adversary, protection against any attribute disclosure in open settings creates a very different set of challenges which we will address in the following sections.
% infeasible in the highly dynamic Big-Data setting that we
% are considering in this paper: in fact, public dissemination of personal attributes is continuously happening, and allows any capable adversary to collect this information and thereby extend its knowledge about the user. We therefore aim to bound the risk of identity disclosure in this paper
% by taking into account all publicly disseminated user data.

%!TEX root = main.tex
\section{A Framework for Privacy in Open Settings}\label{section:general_model}
In this section, we first develop a user model 
that is suited for dealing with the information dissemination behavior commonly observed on 
the Internet. We then formalize our adversary model and show, as a sanity check, that hard privacy guarantees cannot 
be achieved in open settings. We conclude by defining privacy goals in open settings through user-specified privacy requirements 
from which we then derive a new definition of information sensitivity suited to open settings.

%!TEX root = main.tex
\subsection{Modeling Information in Open Settings}\label{section:data_model}
%In what follows, we require an information model that is able to deal with the heterogeneous information that users disseminate on the Internet. To model this kind of information
%we utilize statistical models, as introduced by Backes et al.~\ref{d-convergence}, to model arbitrary entities with regard to their information content. 
%We hence start by recalling relevant definitions from~\ref{d-convergence}:

We first define the notion of entity models and restricted entity models. These models capture the behavior of these 
entities and in particular describe which attributes an entity exhibits publicly.
\begin{definition}[Entity Model] \label{def:statmodel}
%A statistical model $\theta_\entity$ of an entity $\entity$ is a probability distribution over a set of attributes $\Attributes$. 
%Given a set of attributes $\Attributes$, the \emph{entity model} $\theta_\entity = \{\theta_\entity^{i}\}$
Let $\Attributes$ be the set of all attributes. The \emph{entity model} $\theta_\entity$ 
of an entity $\entity$ provides for all attributes $\attribute\in\Attributes$ an \emph{attribute value} $\theta_\entity(\attribute)\in\dom(\attribute)\cup \{\mathsf{NULL}\}$ where $\dom(\attribute)$ is the domain over which the attribute $\attribute_i$ is defined.

The domain $\dom(\theta)$ of an entity model $\theta$ is the set of 
all attributes $\attribute\in\Attributes$ with value $\theta(\attribute) \neq \mathsf{NULL}$.  
%distributions $Pr[\attribute_i = x | \theta_\entity^{i} ]$ 
%for each attribute $\attribute\in\Attributes$.
%The domain $\dom(\theta)$ of a statistical model $\theta$ is the set of attributes $\Attributes$ over which it is defined.
%For every attribute $\attribute \in \Attributes$, the statistical model $\theta_\entity$ determines 
\end{definition}
An entity model thus corresponds to the information an entity can publicly disseminate.
With the specific null value $\mathsf{NULL}$ we can also capture those cases where the entity
does not have any value for that specific attribute. 

In case the adversary has access to the full entity model, a set of entity models basically corresponds to a 
database with each attribute $\attribute\in\Attributes$ as its columns. In the open setting, however, an entity typically does 
not disseminate all attribute values, but instead only a small part of them. We capture this with the notion 
of restricted entity models.

\begin{definition}[Restricted Entity Model]
The \emph{restricted entity model} $\theta_\entity^{\Attributes'}$ is the entity model of \entity restricted 
to the non empty attribute set $\Attributes' \neq \emptyset$, i.e.,
\begin{equation*}
	\theta_\entity^{\Attributes'}(\attribute) = 
		\begin{cases}
			\theta_\entity(\attribute), & \text{if $\attribute\in\Attributes'$}\\
			\mathsf{NULL}, & \text{otherwise}
		\end{cases}
\end{equation*}
\end{definition}
In the online setting, each of the entities above corresponds to an online profile. A user $u$ usually uses more than one online service, each with different profiles $P_1^u,\ldots, P_l^u$. We thus define a user model as the collection of the entity models describing each of these profiles.

\begin{definition}[User Model / Profile Model]
The \emph{user model}\linebreak $\theta_u = \{\theta_{P_1^u,},\ldots,\theta_{P_1^u}\}$ of a user $u$ is 
a set of the entity models $\theta_{P_1^u,},\ldots,\theta_{P_1^u,}$, which we also call \emph{profile models}.
%with
%\[Pr[\attribute \mid \theta_u] = \frac{1}{l} \sum_i Pr[\attribute \mid \theta_{P_i^u}] \]
\end{definition}
With a user model that separates the information disseminated under different profiles, we will be able to formulate privacy requirements for each of these profiles separately. We will investigate this in Section~\ref{section:policies}.
%!TEX root = main.tex
\subsection{Adversary Model}\label{section:adversary_model}
In the following we formalize the adversary we consider for privacy in open settings. In our formalization, we follow the definitions of a semantic, Bayesian adversary introduced by Kasiviswanathan and Smith~\cite{Semantic_of_Differential_Privacy}.

For any profile $P$, we are interested in what the adversary $\adv$ learns about $P$ observing publicly available information from $P$. We formalize this learning process
through \emph{beliefs} on the models of each profile.
% i.e. how his a-priori 
%belief changes to his a-posteriori belief. We first formalize the \adv's a-priori belief as follows.

\begin{definition}[Belief]
Let $\Profiles$ be the set of all profiles and let $\mathcal{D}_\Attributes$ be the set of all distributions over profile models. 
A \emph{belief} $b = \{b_P | P\in\Profiles\}$ is a set of distributions $b_P\in\mathcal{D}_\Attributes$.  
\end{definition}
We can now define our privacy adversary in open settings using the notion of belief above.
\begin{definition}[Adversary]
An adversary $\adv$ is a pair of \emph{prior} belief $b$ and world knowledge $\Know$, i.e., $\adv = (b,\Know).$
\end{definition}
The adversary \adv's prior belief $b$ represents his belief in each profile's profile model before makes any observations. This prior belief can, in particular, also include background knowledge about each profile $P$. The world knowledge $\Know$ of the adversary represents a set of inference rules that allow him to infer additional attribute values about each profile from his observations.

We next define the publicly observations based on which the adversary learns additional information about each profile.
\begin{definition}[Publication Function]
A \emph{publication function} $G$ is a randomized function that maps each profile model $\theta_P$ to a restricted profile model 
$G(\theta_P) = \theta_P^{\Attributes'}$ such that there is at least one attribute $\attribute\in\Attributes'$ with 
$\theta_P(\attribute) = G(\theta_P)(\attribute)$.
\end{definition}
The publication function $G$ reflects which attributes are disseminated publicly by the user through his profile $P$. $G$ can, in particular, also include local sanitization where some attribute values are perturbed. However, we do require that at least one attribute value remains correct to capture utility requirements faced in open settings.

A public observation now is the collection of all restricted profile models generated by a publication function. 
\begin{definition}[Public Observation]
Let $\Profiles$ be the set of all profiles, and let $G$ be a publication function. 
The \emph{public observation} $\Obs$ is the set of all restricted profile models generated by $G$, i.e.,    
%randomized \emph{observations generation function} $G$, that chooses a 
%subset $\Attributes_P\subseteq\Attributes$ of all attributes for each 
%profile $P\in\Profiles$ and returns the profile model restricted to this 
%attribute set, 
$\Obs = \{G(\theta_P)| P\in\Profiles\}$.  
%all attribute values according to the statistical models $\theta_P$ of each profiles.
\end{definition}
The public observation $\Obs$ essentially captures all publicly disseminated attribute values that can be observed by the adversary. 
%all publicly disseminated information of the profiles in $\Profiles$, represented by their attribute values. 
Given such an observation $\Obs$, we can now determine what the adversary $\adv$ learns about each profile by determining his \emph{a-posteriori belief}. 

\begin{definition}[A-Posteriori Belief]
Let \Profiles be the set of all profiles. Given an adversary $\adv = (b,\Know)$ and a public observation $\Obs$, the adversary's \emph{a-posteriori belief} $\overline{b}=\{\overline{b}_P \in \mathcal{D}_\Attributes|P\in \Profiles\}$ is determined by applying the Bayesian inference rule, i.e.,
\[\overline{b}_P[\theta|\Obs, \Know] = \frac{Pr[ \Obs | \Know, \theta] \cdot b_P[\theta]}{\sum_{\theta'} Pr[ \Obs |\Know, \theta'] \cdot b_P[\theta']}.\]
\end{definition}
Here, the conditional probability $Pr[\Obs | \Know, \theta]$ describes the likelihood that the observational $\Obs$ is created by the specific entity model $\theta$. 
%This probability heavily depends on the adversary's inference capabilities: on the low end, $\theta$ basically corresponds to a direct empirical estimation based on the observation $\Obs$. On the high end, however, this can include statistical inference based on the observation $\Obs$ and the adversary's world knowledge $\Know$.

%The a-posteriori belief defined above naturally induces an a-posteriori statistical models $\overline{\theta}_P$ for each profile.
%\begin{definition}[A-posteriori profile model]
%Let \Attributes be the set of all attributes. Given an a-posteriori belief $\overline{b}_P$ of a profile $P$, the \emph{a-posteriori profile model} of $P$ is determined by 
%$\overline{\theta}_P = \{ \overline{\theta}_P^i\}$ with
%\[Pr[\attribute_i = x \vert \overline{\theta}_P^i] = \sum_{\theta} Pr[\attribute_i = x \vert \theta^i] \cdot b_P[\theta \vert \Obs]\]
%for all attributes $\attribute_i\in\Attributes$. 
%\end{definition}
We will utilize the a-posteriori belief of the adversary to reason about the violation of the user specified privacy requirements in Section~\ref{section:policies}.
%With this formalization of the adversary's beliefs, we will, in the next section, define privacy requirements for privacy in open settings.

%!TEX root = main.tex
\subsection{Inapplicability of Statistical Privacy Notions}\label{section:inapplicability}
In the following, we formally show that traditional non-disclosure guarantees, e.g., in the style of Differential Privacy, are not possible in open settings. 

Kasiviswanathan and Smith~\cite{Semantic_of_Differential_Privacy} provide a general definition of 
non-disclosure they call $\epsilon$-privacy. In their definition, they compare the adversary $\adv$'s a-posteriori beliefs 
after observing the transcript $t$ generated from a database sanitazitaion mechanism $\mathcal{F}$ applied 
on two adjacent databases with n rows: first on the database $x$, leading to the belief $\overline{b}_0[.|t]$, 
and secondly on the database $x_{-i}$, where a value in the $i$th row in $x$ is replaced by a default value, 
leading to the belief $\overline{b}_i[.|t]$.
%Kasiviswanathan and Smith~\cite{Semantics_of_Differential_Privacy} define $\epsilon$-semantic privacy as follows.
\begin{definition}[$\epsilon$-semantic Privacy~\cite{Semantic_of_Differential_Privacy}]
Let $\epsilon\in[0,1]$. A randomized algorithm $\mathcal{F}$ is $\epsilon$-semantically private if for all belief distributions $b$ on $D^n$, for all possible transcripts, and for all $i=1\ldots n$: \[\mathsf{SD}(\overline{b}_0[.|t], \overline{b}_i[.|t]) \leq \epsilon.\]
\end{definition}
Here, $\mathsf{SD}$ is the total variation distance of two probability distributions. 

\begin{definition}
Let $X$ and $Y$ be two probability distributions over the sample space $D$. 
The \emph{total variation distance} $\mathsf{SD}$ of $X$ and $Y$ is
\[\mathsf{SD}(X,Y) = \mathsf{max}_{S\subset D}\left[ Pr[X\in S] - Pr[Y\in S] \right].\]
\end{definition}
Kasiviswanathan and Smith~\cite{Semantic_of_Differential_Privacy} show that $\epsilon$-differential privacy is essentially equivalent to $\epsilon$-semantic privacy. 

In our formalization of privacy in open settings, varying a single database entry corresponds to changing the value of a single 
attribute $\attribute$ in the profile model $\theta_P$ of a profile $P$ to a default value. We denote this 
modified entity model with $\theta_P^\attribute$,
and the thereby produced a-posteriori belief by $\overline{b}_P^\attribute$. 
A profile $P$ would then be $\epsilon$-semantically private if for any modified profile 
model $\theta_P^\attribute$, the a-posteriori belief of adversary $\adv$ does not change by more than $\epsilon$. 

\begin{definition}[$\epsilon$-semantic Privacy in Open Settings]
Let $\epsilon\in[0,1]$. A profile $P$ is $\epsilon$-semantically private in open settings if for any attribute $\attribute$,
\[\mathsf{SD}(\overline{b}_P[.|\Obs], \overline{b}_P^\attribute[.|\Obs]) \leq \epsilon\]
where $\overline{b}_P$ and $\overline{b}_P^\attribute$ are the a-posteriori beliefs of the adversary after 
observing the public output of $\theta_P$ and $\theta_P^\attribute$ respectively.
\end{definition}
As expected, we can show that $\epsilon$-semantic privacy can only hold for $\epsilon = 1$ in open settings.
\begin{theorem}
For any profile model $\theta_P$ and any attribute $\attribute$, there is an adversary $\adv$ 
such that \[\mathsf{SD}(\overline{b}[.|\Obs], \overline{b}^\attribute[.|\Obs]) \geq 1.\]
\end{theorem}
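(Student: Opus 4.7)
My plan is as follows. I note first that the total variation distance between probability distributions is bounded above by $1$, so $\mathsf{SD} \geq 1$ amounts to equality, and equality holds exactly when the two posterior distributions have disjoint supports. The proof therefore reduces to exhibiting an adversary whose posteriors in the two experiments collapse onto distinct single models.

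I would proceed by explicit construction. Let $v := \theta_P(\attribute)$; without loss of generality $v$ differs from the default value used to produce $\theta_P^\attribute$, else $\theta_P = \theta_P^\attribute$ and the claim is vacuous. Define $\adv = (b, \Know)$ by taking the prior $b_P$ to place probability $\tfrac{1}{2}$ on each of $\theta_P$ and $\theta_P^\attribute$ and $0$ on every other profile model, and by letting the world knowledge $\Know$ faithfully encode the publication mechanism, so that $\Pr[\Obs \mid \Know, \theta]$ coincides with the probability that the publication function produces $\Obs$ from $\theta$.

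Next I would analyze the posterior. Because any publication function must publish at least one attribute correctly, there is positive probability that an observation from $\theta_P$ contains the pair $(\attribute, v)$, and symmetrically positive probability that an observation from $\theta_P^\attribute$ contains $(\attribute, \bot)$ where $\bot$ is the default value. For any $\Obs$ of the first kind, $\Pr[\Obs \mid \Know, \theta_P^\attribute] = 0$ since $\theta_P^\attribute(\attribute) = \bot \neq v$, and by Bayes' rule $\overline{b}_P[\,\cdot\, \mid \Obs]$ is a Dirac mass at $\theta_P$. Symmetrically $\overline{b}_P^\attribute[\,\cdot\, \mid \Obs]$ is a Dirac mass at $\theta_P^\attribute$. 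These two point masses sit on distinct models, so $\mathsf{SD} = 1$.

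The main obstacle is legitimating the step in which $\Obs$ is chosen to reveal $\attribute$'s value: strictly, one must ensure that the publication mechanism can, with positive probability, expose $\attribute$ faithfully so that the likelihoods in the two experiments are incompatible. This is precisely where the open-setting assumption enters, since there is no global sanitization restricting which attributes may be disseminated, and the proof should emphasize that this freedom is what makes tailored prior beliefs together with faithful inference rules sufficient to drive the posteriors to disjoint supports.
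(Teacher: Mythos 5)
Your proposal is correct and follows essentially the same route as the paper: both exhibit an adversary for which the observation pins down the published value of $\attribute$, so the two posteriors become point masses on distinct profile models and the total variation distance equals $1$. The differences are cosmetic --- the paper takes a uniform prior with empty world knowledge where you take a two-point prior with a faithful likelihood (arguably cleaner, since it avoids the paper's unexplained step that a uniform prior collapses onto the single model that is $\mathsf{NULL}$ outside the observed attributes) --- and both arguments rely equally on the assumption, made explicit by the paper and flagged by you as the ``main obstacle,'' that the attribute preserved by the publication function is exactly the attribute $\attribute$ being varied.
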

\begin{proof}
Let $\adv$ have a uniform prior belief, i.e., all possible profile models have the same probability, and empty world knowledge $\Know$. 
Let \attribute be the one attribute that remains the same after applying the publication function $G$.
Let $x$ be the original value of this attribute $\attribute$ and let $x^\ast$ be the default value that replaces $x$. 

Observing the restricted profile model $\theta_P[\Attributes']$ without any
additional world knowledge will lead to an a-posteriori belief, 
where the probability of the entity model $\theta$ with $\theta[\Attributes'] = \theta_P[\Attributes']$ and 
$\mathsf{NULL}$ everywhere else, is set to 1. 

Conversely, the modified setting will result in an a-posteriori belief that sets the probability for the entity model 
$\theta^\ast$ to one, where $\theta^\ast$ is constructed for the modified setting as $\theta$ above. 
%for the modified setting where the value of $\attribute$ is set to $x^\ast$. 
Thus $\overline{b}[\theta|\Obs] = 1$, whereas $\overline{b}^\attribute[\theta|\Obs] = 0$, and hence $\mathsf{SD}(\overline{b}[.|\Obs], \overline{b}^\attribute[.|\Obs]) = 1$. 
\qed
\end{proof}

Intuitively, the adversary can easily distinguish differing profile models because a) he can directly observe the profiles publicly available information, b) he chooses which attributes he considers for his inference and c) only restricted, local sanitization is available to the profile. Since these are elementary properties of privacy in open settings, we can conclude that hard security guarantees in the style of differential privacy are impossible to achieve in open settings. 

However, we can provide an assessment of the disclosure risks by explicitly fixing the a-priori knowledge and the attribute set considered by the adversary. While we no longer all-quantify over all possible adversaries, and therefore lose the full generality of traditional non-disclosure guarantees, we might still provide meaningful privacy assessments in practice. We further discuss this approach in Section~\ref{section:sensitive}, and follow this approach in our instantiation of the general model for assessing the likelihood of identity disclosure in Section~\ref{section:linkability_model}.

%!TEX root = main.tex
\subsection{User-Specified Privacy Requirements}\label{section:policies}
In the following we introduce user-specified privacy requirements that allow us to formulate privacy goals that 
are user- and context-dependent. These can then lead to restricted privacy assessments instead of general privacy guarantees that we have shown to be impossible in open setting in the previous section.  

%As pointed out in~\cite{Narayanan08,Narayanan10}, in practice, there are no personal attributes that are inherently more 
%sensitive than other attributes. In fact, the sensitivity
%of any personal attribute depends on the context of use. 
%
%While there is prior work that aims to infer the sensitivity of information from the context of interactions in online social networks, this work is empirical and only evaluates which type of information can potentially be sensitive, and not whether and how the same user can deem different information sensitive in different contexts. For our formal definition in \fname, we instead assume user-specified privacy policies, i.e., policies that allow each user to explicitly specify which attributes should not appear in the public user model and hence should be hidden from the adversary.
%For example, the user might have an online social network profile solely for the purpose of discussing political topics, and additionally maintains a separate personal profile that is kept detached from the political profile. Privacy requirements would then be that the personal profile does not leak any information about the user's political opinions, whereas the political account does not leak any personal information.
We define a user's privacy requirements on a per-profile basis, stating which attribute values should not be inferred by adversary after seeing a public observations \Obs. 

\begin{definition}[Privacy Policy]
A \emph{privacy policy} $\Policy$ is a set of privacy requirements $\policy = (P, \{\attribute_i = x_i\})$ which require that profile $P$ should never expose the attribute values $x_i$ for the attributes $\attribute_i \in \Attributes$.
\end{definition}
By setting privacy requirements in a per-profile basis we capture an important property of information dissemination in open settings: users utilize different profiles for different context (e.g., different online services) assuming these profiles remain separate and specific information is only disseminated under specific circumstances.
%In practice, one could again use RDF to specify such privacy requirements, thereby enabling a direct translation between user model and privacy policies. The attributes in the above definition would then be replaced by RDF triplets with the profile $P$ as the subject.

Given the definition of privacy policies, we now define the violation of a policy by considering the adversary's a-posterior belief $\overline{b}$, as introduced in Section~\ref{section:adversary_model}. 
%Since we model the information leakage of the user's actions through his user model $\theta_{P_u}$, we define how a user model violates a privacy policy  (instead of defining policy violation for individual user actions).

\begin{definition}[Privacy Policy Satisfaction / Violation]
Let $\adv = (b,\Know)$ be an adversary with a-posteriori belief $\overline{b}$, and let $\theta[\attribute=x]$ be the set of all entity models that have the value $x$ for the attribute $\attribute$.
A profile $P_i^u$ \emph{$\sigma$-satisfies} a user's privacy requirement $\policy_j^u = (P, \{\attribute_i = x_i\})$, written $P_i^u \models_\sigma \policy_j^u$, if
\begin{itemize}
\item $P = P_i^u$
\item $\forall \attribute_i: \sum_{\theta\in\theta[\attribute_i=x_i]} \overline{b}_P[\theta|\Obs,\Know] \leq \sigma$
\end{itemize} 
and \emph{$\sigma$-violates} the user's privacy requirement otherwise.

A user model $\theta_{u}$ \emph{$\sigma$-satisfies} a user $u$'s privacy policy $\Policy_u$, written $\theta_{u} \models_\sigma \Policy_u$, if all profile models $\theta_{P_i^u}$ $\sigma$-satisfy their corresponding privacy requirements, and \emph{$\sigma$-violates} the privacy policy otherwise.
\end{definition} 
The above attributes can also take the form of ``$P$ belongs to the same user as $P'$'', effectively restricting which profiles should be linked to each other. We will investigate this profile linkability problem specifically in Section~\ref{section:linkability_model}.

\subsection{Sensitive Information}\label{section:sensitive}
In contrast to the closed-world setting, with its predefined set of sensitive attributes that 
automatically defines the privacy requirements, a suitable definition of information 
sensitivity in the open setting is still missing. 
%is a function of the user-specified privacy requirements and the context in which the information is disseminated.
In the following, we derive the notion of sensitive information from the user privacy requirements we defined in Section~\ref{section:policies}. %This in particular corresponds challenge C3 identified in Section~\ref{section:challenges}.
%!TEX root = main.tex
%\subsection{Defining Sensitive Information}

%Literature and legislation often captures sensitive information under the notion of \emph{Personally Identifiable Information} (PII)~\cite{nistpii}. This notion covers a pre-defined set of attributes regarding personal location, religion, gender, etc. that are generally viewed as sensitive. As shown in the literature~\cite{Narayanan08,Narayanan10,Schwartz11}, however, any kind of information can be potentially sensitive, and can lead to privacy breaches. 
%To capture the spirit of user- and context-depdent information sensitivity we therefore define information 
%sensitivity based on the user-specified privacy requirements (cf. Section~\ref{section:policies}) to capture
%the kind of information that should be protected by any privacy-preserving mechanism.

\begin{definition}[Sensitive Attributes]
A set of attributes $\Attributes^\ast$ is \emph{sensitive} for a user $u$ in the context of her profile $P_i^u$ if $u$'s privacy policy $\Policy_u$ 
contains a privacy requirement $\policy = (P_i^u,\Attributes' = X)$ where $\Attributes^\ast\subseteq\Attributes'$.
\end{definition}
Here, we use the notation $\Attributes = X$ as vector representation for $\forall \attribute_i \in \Attributes: \attribute_i = x_i$.

Sensitive attributes, as defined above, are not the only type of attributes that are worth to protect: In practice, an adversary can additionally infer sensitive attributes from other attributes through statistical inference using a-priori knowledge. We call such attributes that allow for the inference of sensitive attributes \emph{critical attributes}. 

\begin{definition}[Critical Attributes]
Given a set of attributes $\Attributes^\ast$, let $P$ be a profile with $\dom(\theta_P) \supseteq \Attributes$, and let $P'$ be the profile with the restricted profile model $\theta_{P'} = \theta_P^{\Attributes'}$, where 
$\Attributes' = \dom(\theta_P) \setminus \Attributes^\ast$.
%the profile obtained from $P$ by removing all attributes in \Attributes, i.e. $\dom(\theta_P') = \dom(\theta_P) \setminus \Attributes$.

The set of attributes $\Attributes^\ast$ is \emph{$\sigma$-critical} for the user $u$ that owns the profile $P$ and an adversary with prior belief $b_P$ and world knowledge $\Know$, if $u$'s privacy policy $\Policy_u$ contains a privacy requirement 
$\policy$ such that $P$ $\sigma$-violates $\policy$ but $P'$ does not. 
%the adversary's a posteriori model $\overline{\theta}_{P}$ violates $\policy$, but the a posteriori model $\overline{\theta}_{P'}$ does not violate $\policy$.
\end{definition}
Critical information require the same amount of protection as sensitive information, the difference however being that critical information is only protected for the sake of protecting sensitive information. 

As a direct consequence of the definition above, sensitive attributes are also critical.
\begin{corollary}\label{cor:sens_crit}
Let \Attributes be a set of sensitive attributes. Then \Attributes is also $0$-critical.
\end{corollary}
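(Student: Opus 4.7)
The plan is to unfold the definitions of sensitive and critical attributes and to exhibit the defining privacy requirement as a witness. Since $\Attributes$ is sensitive for a user $u$ in the context of some profile $P$, by Definition of sensitive attributes there exists a privacy requirement $\policy = (P, \Attributes' = X)$ in $u$'s privacy policy $\Policy_u$ with $\Attributes \subseteq \Attributes'$. To verify that $\Attributes$ is $0$-critical, I would then check the two conditions of the Definition of critical attributes with this very $\policy$ as witness.

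First, I would argue that $P$ $0$-violates $\policy$. Because the publication function $G$ is required to keep at least one attribute value correct, we can take the adversary whose observation reveals one of the attribute values $\attribute_i = x_i$ with $\attribute_i\in\Attributes$ (or more generally, any adversary whose posterior places nonzero weight on some $\theta$ with $\theta(\attribute_i)=x_i$). For such an adversary the Bayesian update yields $\sum_{\theta\in\theta[\attribute_i = x_i]}\overline{b}_P[\theta \mid \Obs,\Know] > 0$, so the $\sigma = 0$ threshold in the definition of $\sigma$-satisfaction is strictly exceeded.

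Second, I would show that the accompanying restricted profile $P'$, with $\theta_{P'} = \theta_P^{\dom(\theta_P)\setminus \Attributes}$, does \emph{not} $0$-violate $\policy$. The argument here is that after removing all of $\Attributes$ from the domain of $\theta_P$, every publication $G(\theta_{P'})$ assigns $\mathsf{NULL}$ to each $\attribute_i\in\Attributes$, so the observation $\Obs$ is inconsistent with any profile model that attains $\attribute_i = x_i \neq \mathsf{NULL}$. By the Bayesian inference rule, the likelihood $Pr[\Obs \mid \Know, \theta]$ then vanishes on $\theta[\attribute_i = x_i]$, yielding posterior sum exactly $0$ (for an adversary whose $\Know$ contains no inference rule mapping the remaining attributes back to the values in $X$).

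The main obstacle is the second half: the definition of critical attributes is parameterized by an adversary $(b_P,\Know)$, and for an adversary whose world knowledge lets him re-derive the sensitive values from other (non-removed) attributes, the implication can fail. I would therefore carefully state that the corollary is interpreted relative to the same adversary for which sensitivity of $\Attributes$ is under discussion, and exhibit the adversary described above (direct observation of a sensitive value for $P$; no auxiliary inference rules recovering them in $P'$) as the concrete witness making $\Attributes$ $0$-critical.
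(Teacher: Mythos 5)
The paper offers no proof of this corollary at all---it is asserted as ``a direct consequence of the definition above''---so there is nothing to compare against except the definitions themselves. Your definition-unfolding strategy is the only sensible route, and its chief merit is that it makes explicit that the ``direct consequence'' is not unconditional: $\sigma$-criticality is parameterized by an adversary $(b_P,\Know)$, and as you correctly observe, an adversary whose prior or world knowledge lets him recover the values in $X$ from the attributes remaining in $\dom(\theta_P)\setminus\Attributes^\ast$ makes $P'$ $0$-violate $\policy$ as well, so the statement is false if read as quantified over all adversaries. Your resolution---fix a concrete witness adversary with uniform prior and empty world knowledge, matching the adversary used in the paper's own impossibility theorem---is the right repair, and your treatment of the second condition (the posterior on $\theta[\attribute_i=x_i]$ vanishes once $\attribute_i$ is restricted away) is consistent with how the paper itself computes a-posteriori beliefs in that theorem's proof.

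One step deserves more care. You write that for the first condition we ``take the adversary whose observation reveals one of the attribute values $\attribute_i=x_i$,'' but the observation $\Obs=\{G(\theta_P)\mid P\in\Profiles\}$ is generated by the publication function $G$, which is a property of the user's dissemination behavior, not something the adversary selects. The guarantee on $G$ is only that \emph{some} attribute value survives unperturbed, not that a sensitive one does. If $G$ never publishes any $\attribute_i\in\Attributes^\ast$ and $\Know=\emptyset$, then under the paper's own update convention the posterior mass on $\theta[\attribute_i=x_i]$ is already $0$, so $P$ $0$-satisfies $\policy$ and $\Attributes^\ast$ fails to be $0$-critical for that configuration. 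Your witness therefore has to fix $G$ (so that it exposes at least one sensitive attribute of $P$) in addition to fixing the adversary. With that additional stipulation made explicit, your argument goes through; without it, the corollary as stated has a genuine gap that your proof inherits rather than closes.
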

Another consequence we can draw is that privacy requirements will always be satisfied if no critical attributes are disseminated.
\begin{corollary}\label{cor:assessment}
Let \Obs be a public observations that does not include any critical attributes for a user $u$ and an adversary $\adv$. Then $u$'s privacy policy $\Policy_u$ is $\sigma$-satisfied against $\adv$.
\end{corollary}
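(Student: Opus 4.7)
My plan is to argue by contraposition. Suppose $\Policy_u$ is $\sigma$-violated against $\adv$; I will exhibit an attribute set inside $\Obs$ that satisfies the definition of $\sigma$-critical, contradicting the hypothesis. By the definition of policy violation, there exists a profile $P = P_i^u$ and a corresponding requirement $\policy_j^u \in \Policy_u$ such that the sum of the a-posteriori beliefs $\overline{b}_P[\theta|\Obs,\Know]$ over entity models carrying a forbidden attribute value exceeds $\sigma$.

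First I would identify the set $\Attributes^\ast$ of attributes publicly disseminated from $P$ in $\Obs$, namely the non-$\mathsf{NULL}$ part of the restricted model $G(\theta_P)$ included in $\Obs$. Next, I would examine the a-posteriori belief that would arise from the modified observation in which the restriction of $\theta_P$ is further cut down to $\dom(\theta_P) \setminus \Attributes^\ast$. Because this modified publication contributes no attribute values of $P$, Bayesian updating leaves the adversary's belief about $P$ equal to what he can derive from his prior $b_P$ together with $\Know$ and the (unchanged) observations of other profiles. Under the tacit baseline assumption that this residual belief does not already $\sigma$-violate $\policy$ - the only regime in which the critical-attribute definition is meaningful - the corresponding $P'$ $\sigma$-satisfies $\policy$. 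Thus $\Attributes^\ast$ meets the defining conditions of a $\sigma$-critical attribute set for $u$ and $\adv$, contradicting the hypothesis that $\Obs$ contains none.

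The main obstacle I anticipate is twofold. First, the restricted entity model is defined only for a non-empty attribute set, so if $\Attributes^\ast$ happens to coincide with $\dom(\theta_P)$ one cannot literally strip everything at once. I would circumvent this by removing disclosed attributes one at a time and picking the first proper subset of $\Attributes^\ast$ whose removal eliminates the violation; such a subset must exist because the full removal does, and by construction it is critical. Second, the argument hinges on the baseline that the adversary's prior plus world knowledge alone does not already $\sigma$-violate $\policy$; I would make this implicit assumption explicit, since it is the natural non-degeneracy condition under which Corollary~\ref{cor:sens_crit} and the notion of criticality are introduced in the first place.
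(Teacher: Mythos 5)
Your argument is correct in substance, and it is worth noting that the paper itself offers no proof of this corollary at all --- it is asserted as a ``consequence'' of the definition of critical attributes --- so your contrapositive unwinding (policy violated $\Rightarrow$ the disclosed attribute set of the offending profile is $\sigma$-critical, hence contained in $\Obs$) is exactly the intended, and essentially the only available, argument. The two obstacles you flag are both real, and the second one is more than a technicality: if the adversary's prior $b_P$ together with $\Know$ already assigns mass greater than $\sigma$ to the forbidden attribute values, then the policy is $\sigma$-violated no matter what is published, yet \emph{no} attribute set is $\sigma$-critical (since the restricted profile $P'$ in the definition of criticality still violates the requirement for every choice of $\Attributes^\ast$), so $\Obs$ vacuously contains no critical attributes and the corollary as literally stated is false. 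Your explicit non-degeneracy hypothesis is therefore a needed repair of the statement, not merely of your proof. Your one-at-a-time removal trick for the case where the publication function discloses all of $\dom(\theta_P)$ (so that stripping everything would leave the disallowed empty restriction) is also sound, since the first removal step that flips violation into satisfaction certifies a critical subset of the disclosed attributes. The only further caveat worth recording is that the paper's definition of a profile ``$\sigma$-violating'' a requirement is stated relative to the fixed observation $\Obs$, whereas criticality compares $P$ against a modified profile $P'$; your reading --- that the comparison is made with the observation of the other profiles held fixed and only $P$'s published attributes curtailed --- is the only one under which the corollary is meaningful, and it would be worth stating that convention explicitly.
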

The corollary above implies that, while we cannot provide general non-disclosure guarantees in open settings, we can provide privacy assessments for 
specific privacy requirements, given an accurate estimate of the adversary's prior beliefs.\\\\ 
While privacy assessments alone are not satisfactory from a computer 
security perspective, where we usually require hard security guarantees quantified over all possible adversaries, the fact remains that we are faced with privacy 
issues in open settings that are to this day unanswered for due to the impossibility of hard guarantees in such settings. Pragmatically thinking, we are convinced
that we should move from impossible hard guarantees to more practical privacy assessments instead. 
This makes particularly sense in settings where users are not victims of targeted attacks, but instead fear
attribute disclosure to data-collecting third parties.   
%\begin{example}
%Assume Alice has a profile $P$ in an online social network that is directly connected to her personal identity (e.g. Facebook), and another pseudonymous profile $P'$ with which she posts in a health discussion forum. If Alice decides that she does not want any health related information $\info_H$ to be connected with $P$, this information is now deemed sensitive, independent of what information she shares through $P$ or $P'$. 

%Alice might now reveal the connection between $P$ and $P'$ by revealing under $P'$ her current location, age and occupation. This information is therefore critical. It is, however, only critical because this information was already available under $P$, and would not have connected the profiles otherwise. Hence, the criticality of information depends on the state of each profile. 
%\end{example}

%During interactions over the web, the user should try to avoid exposing critical information in order to reduce the probability of exposing sensitive information.
%In the following we propose an action model that identifies situations in which leakage of critical information happens and allows us to differentiate different types of critical information. This action model internally uses the knowledge transition system we introduced in Section~\ref{section:learning} to evaluate the criticality of user actions.
%\input{action_model}
%!TEX root = main.tex
\section{Linkability in Open Settings}\label{section:linkability_model}
In the following we instantiate the general privacy model introduced in the last section to reason about the likelihood that two profiles of the same user are linked by the adversary in open settings. We introduce the novel notion of $(k,d)$-anonymity with which we assess anonymity and linkability based on the similarity of profiles within an online community.

To simplify the notation we introduce in this section, we will, in the following, talk about matching \emph{entities} $\entity$ and $\entity'$ the adversary wants to link, instead of profiles $P_1$ and $P_2$ that belong to the same user $u$. All definitions introduced in the general framework above naturally carry over to entities as well.

\subsection{Model Instantiation for Linkability}
In the linkability problem, we are interested in assessing the likelihood that two matching entities $\entity$ and $\entity'$ can be linked, potentially across different online platforms. The corresponding privacy requirements, as introduced in Section~\ref{section:policies}, are $r_1 = (\entity,\attribute_L)$ and $r_2 = (\entity', \attribute_L)$, where $\attribute_L$ is the attribute that $\entity$ and $\entity'$ belong to the same user. Consequently, we say that these entities are unlinkable if they satisfy the aforementioned privacy requirements.

\begin{definition}[Unlinkability]
Two entities $\entity$ and $\entity'$ are $\sigma$-unlinkable if \\$\{\theta_\entity,\theta_{\entity'}\} \models_\sigma \{r_1, r_2\}$.
\end{definition}

\subsection{Anonymity}\label{sec:anonymity}
To assess the identity disclosure risk of an entity \entity within a collection of entities \Entities, we use the following intuition: $\entity$ is anonymous in \Entities 
if there is a subset $\Entities' \subseteq \Entities$ to which \entity is very similar. The collection 
\Entities' then is an anonymous subset of \Entities for \entity.

To assess the similarity of entities within a collection of entities, we will use a distance measure $\dist$ on the entity models of these entities. We will require that this measure provides all properties of a metric.
%The choice of the distance measure will depend on the use case, and we instantiate it for our evaluation in Section~\ref{section:experiment} with the Jensen-Shannon divergence~\cite{Endres03}.

A collection of entities in which the distance of all entities to \entity is small (i.e., $\leq$ a constant $d$) is called $d$-convergent for \entity. 
%Figure~\ref{fig:convergence} gives an example of convergent sets on the plane. 
%with their convergence factor $d$.
%To define privacy in our model, we need to capture collection which only diverge to a certain extent. We therefore introduce the notion of convergent collections, which states that all entities in a convergent collection only diverge up to a constant value.

\begin{definition}
A \emph{collection} of entities $\Entities$ is $d$-convergent for \entity
%under an $f$-divergence $D_f$ and 
%, given the statistical model $\theta_\Entities$ for $\Entities$ and the statistical model $\theta_\entity$ 
if $\dist(\theta_{\entity}, \theta_{\entity'}) \leq d$ for all $\entity'\in \Entities$.
\end{definition}
Convergence measures the similarity of a collection of individuals. Anonymity is achieved if an entity can find a collection of entities that are all similar to this entity. This leads us to the definition of $(k,d)$-anonymity, which requires a subset of similar entities of size $k$.

\begin{definition}\label{def:kd-anonymity}
An entity $\entity$ is \emph{$(k,d)$-anonymous} in a collection of entities $\Entities$ if there exists a subset of entities $\mathcal \Entities' \subseteq \Entities$ with the properties 
that $\entity \in \Entities$, that $\left| \Entities' \right| \geq k$ and that $\Entities'$ is $d$-convergent.
%\begin{enumerate}
%\item $\entity \in \Entities'$
%\item $\left| \Entities' \right| \geq k$
%\item $\Entities'$ is $d$-convergent
%\end{enumerate}
\end{definition}
An important feature of this anonymity definition is that it provides anonymity guarantees that can be derived from a subset of all available data, but continue to hold once we consider a larger part of the dataset.
\begin{corollary}
If an entity is $(k,d)$-anonymous in a collection of entities \Entities, then it is also $(k,d)$-anonymous in the collection  of entities $\Entities' \supset \Entities$.
\end{corollary}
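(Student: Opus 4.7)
The plan is to prove this by directly exhibiting the same witness subset that establishes $(k,d)$-anonymity in the smaller collection and observing that it continues to be a valid witness in the larger collection. Since $(k,d)$-anonymity is an existential statement about the presence of a suitable subset, enlarging the ambient collection can only add more candidate subsets, never remove the one we already have.

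Concretely, I would start from the assumption that $\entity$ is $(k,d)$-anonymous in $\Entities$. By Definition~\ref{def:kd-anonymity}, this gives a subset $\Entities^\ast \subseteq \Entities$ such that $\entity \in \Entities^\ast$, $|\Entities^\ast| \geq k$, and $\Entities^\ast$ is $d$-convergent for $\entity$ (i.e., $\dist(\theta_\entity, \theta_{\entity'}) \leq d$ for every $\entity' \in \Entities^\ast$). I then take the same $\Entities^\ast$ as the candidate witness for $(k,d)$-anonymity of $\entity$ in $\Entities'$.

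It remains to check each of the three defining conditions with respect to $\Entities'$. Since $\Entities \subset \Entities'$, we immediately get $\Entities^\ast \subseteq \Entities \subseteq \Entities'$, so $\Entities^\ast$ is a valid subset of $\Entities'$. The membership $\entity \in \Entities^\ast$ and the cardinality bound $|\Entities^\ast| \geq k$ are properties of $\Entities^\ast$ itself and therefore carry over verbatim. Finally, $d$-convergence of $\Entities^\ast$ for $\entity$ is a statement purely about distances between $\theta_\entity$ and the entity models of members of $\Entities^\ast$, so it is likewise unaffected by the choice of the surrounding collection. All three conditions of Definition~\ref{def:kd-anonymity} therefore hold for $\Entities^\ast$ inside $\Entities'$, establishing the claim.

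There is no real obstacle in this argument; the proof is essentially a one-line monotonicity observation, and the only care needed is to make explicit that the witness subset $\Entities^\ast$ is reused unchanged and that each of the three defining properties is either intrinsic to $\Entities^\ast$ (membership, size, convergence) or only requires that $\Entities^\ast$ sit inside the ambient collection, which $\Entities \subset \Entities'$ immediately guarantees.
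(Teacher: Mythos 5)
Your proof is correct and is exactly the monotonicity argument the paper intends (the paper states this corollary without proof, treating it as immediate): the witness subset from $\Entities$ is reused unchanged in $\Entities'$, and each defining condition is intrinsic to that subset. No issues.
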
 
Intuitively, $(k,d)$-anonymity is a generalization of the classical notions of $k$-anonymity to open settings without pre-defined quasi-identifiers. We schematically illustrate such anonymous subsets in Figure~\ref{fig:anonymity_settings}.
\begin{figure}[!t]
	\centering
    \includegraphics[width=0.7\columnwidth]{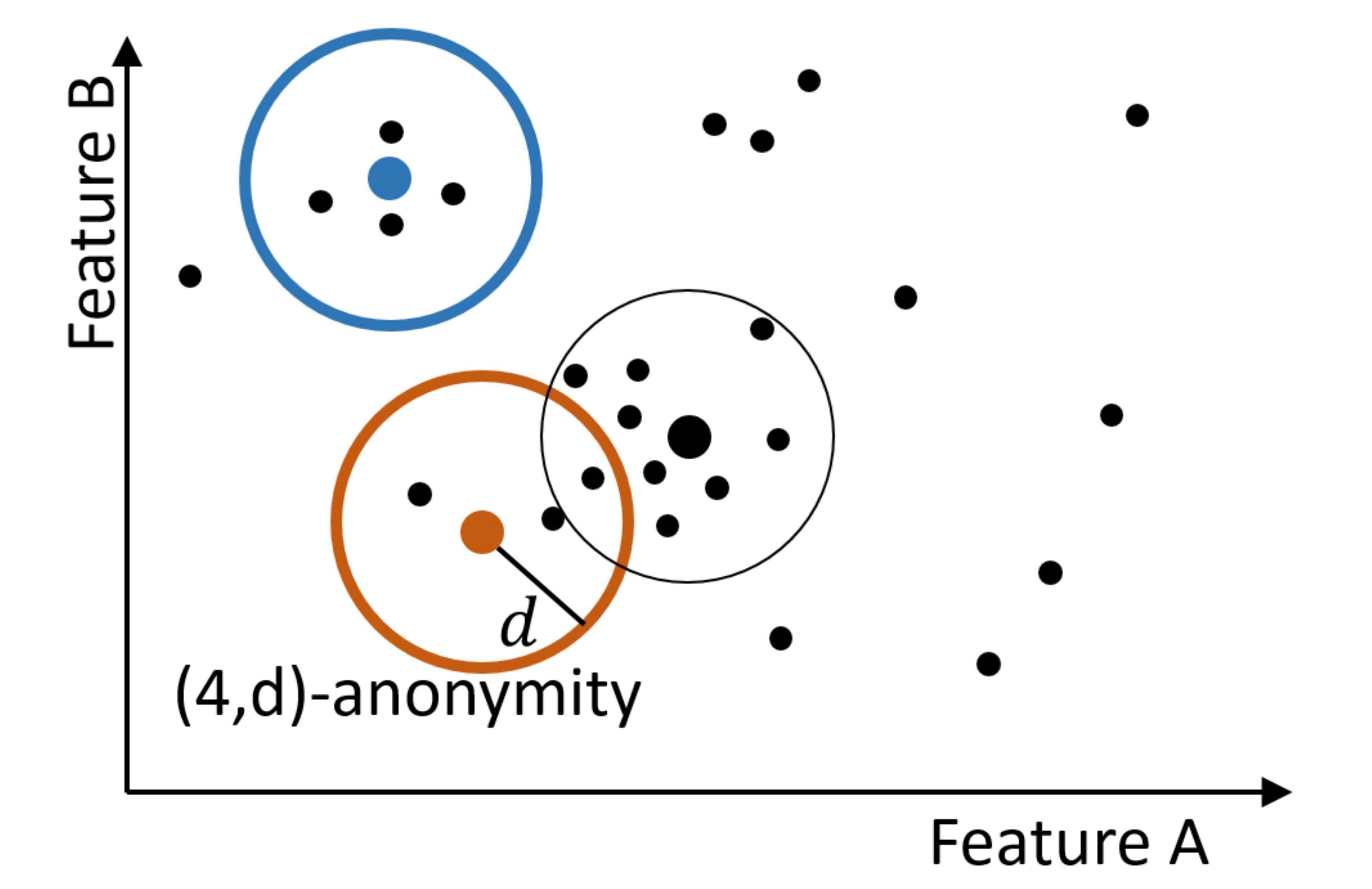}
\caption{Anonymity in crowdsourcing systems.}\label{fig:anonymity_settings}
\vspace*{-2ex}
%%\vspace*{-4ex}
\end{figure}
%Furthermore, $d$-closeness closely resembles the notion of $t$-closeness if we restrict ourselves to statistical databases, as we show in Section~\ref{section:databases}. 

%Similarly to $d$-closeness, we can also transform $t$-closeness, or any other database-privacy definition, into a local privacy notion by restricting ourselves to a private subset of all available data. While this concept is very simple, we have not seen local privacy notions being used in the literature. For Big-Data privacy, however, local privacy notions are very important, since we do not have access to the whole dataset, but instead have to argue about the privacy of single entities in the context of a small slice of data.  

%!TEX root = main.tex
\subsection{Entity Matching}\label{section:matching}
%The adversary's goal in the linkability problem is to match entities from a collection $\Entities$ to his target entity $\entity'$, or to decide whether two entities in a collection belong to the same person. In the following, we formalize the notion of matching entity models required for this task. We assume the adversary has an entity model $\theta_{\entity'}$ to describe his target $\entity'$, which he then uses to identify the matching identity in the collection $\Entities$.
We define the notion of \emph{matching} identities. As before, we use the distance measure $\dist$ to assess the similarity of two entities.
\begin{definition}
An entity $\entity$ $c$-matches an entity $\entity'$ if $\dist(\theta_{\entity}, \theta_{\entity'}) \leq c$.
\end{definition}
Similarly, we can also define the notion of one entity matching a collection of entities.
\begin{definition}
A collection of entities $\Entities$ \emph{$c$-matches} an entity $\entity'$ if all entities $\entity \in \Entities$ 
$c$-match $\entity'$.
\end{definition}
%Here the value $c$ measures the similarity between an entity $\entity$ and the target entity $\entity'$. In an ideal world, the adversary $\Adv$ would require $c=0$ for a perfectly matching 
%entity. This might not, however, always be feasible, as the publicly exhibited attributes by the matching 
%entity $\entity^\ast\in\Entities$ might differ significantly from the previous observations made and captured by $\Adv$ in 
%$\theta_{\entity'}$. Also, $\theta_{\entity'}$ will inherently contain uncertainty unless $\Adv$ has a perfect history of attributes publicly disseminated by $\entity'$.
Assuming the adversary only has access to the similarity of entities, the best he can do is comparing the distance of all 
entities $\entity\in\Entities$ to $\entity'$ and make a probabilistic choice proportional to their relative distance values.

Now, if the matching identity $\entity^\ast$ is $d$-convergent in $\Entities$ the, all entities in $\Entities$ will have a comparatively similar distance to $\entity'$. 
\begin{lemma}\label{lem:convergent_matching}
Let $\Entities$ be $d$-convergent for $\entity^\ast$.
If $\entity^\ast$ $c$-matches $\entity'$, then $\Entities$ $(c+d)$-matches $\entity'$.
\end{lemma}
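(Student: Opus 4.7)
The plan is to prove the lemma via a direct application of the triangle inequality. Recall from Section~\ref{sec:anonymity} that the distance measure $\dist$ is assumed to satisfy all properties of a metric, so in particular $\dist(\theta_x, \theta_z) \leq \dist(\theta_x, \theta_y) + \dist(\theta_y, \theta_z)$ for any three entities $x, y, z$. This is the only nontrivial tool I will need.

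First I would unfold the two hypotheses into explicit distance bounds. From $d$-convergence of $\Entities$ for $\entity^\ast$ I get $\dist(\theta_{\entity^\ast}, \theta_\entity) \leq d$ for every $\entity \in \Entities$. From the assumption that $\entity^\ast$ $c$-matches $\entity'$ I get $\dist(\theta_{\entity^\ast}, \theta_{\entity'}) \leq c$.

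Next I would fix an arbitrary $\entity \in \Entities$ and chain these two bounds through $\entity^\ast$ using the triangle inequality:
\[
\dist(\theta_\entity, \theta_{\entity'}) \;\leq\; \dist(\theta_\entity, \theta_{\entity^\ast}) + \dist(\theta_{\entity^\ast}, \theta_{\entity'}) \;\leq\; d + c.
\]
Since $\entity$ was arbitrary in $\Entities$, this shows that every entity of $\Entities$ is within distance $c+d$ of $\entity'$, which by definition is exactly the statement that $\Entities$ $(c+d)$-matches $\entity'$.

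There is no real obstacle here: the statement is essentially a restatement of the triangle inequality in the terminology of matching and convergence introduced earlier. The only thing to be careful about is to invoke the metric property of $\dist$ explicitly (rather than just treating it as an arbitrary similarity measure), and to verify that convergence and matching were defined symmetrically in their arguments so the chaining goes through without any extra symmetry lemma.
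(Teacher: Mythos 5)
Your proof is correct and follows exactly the same route as the paper's: unfold $d$-convergence and $c$-matching into distance bounds, then chain them through $\entity^\ast$ with the triangle inequality of the metric $\dist$. Your write-up is in fact cleaner about quantifying over an arbitrary $\entity \in \Entities$ and about noting that the metric property must be invoked explicitly.
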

\begin{proof}
Since $\Entities$ is $d$-convergent for $\entity^\ast$, $\forall \entity'\in\Entities:\ \dist(\entity^\ast, \entity') \leq d$.
%, the distance to $\entity$ is bounded by $d$, i.e., $\dist(\entity, \entity') \leq d$. 
Using the triangle inequality, and the fact that $\entity^\ast$ $c$-matches the entity $\entity'$, we can bound the distance of all entities $\entity\in\Entities$ to $\entity'$ by $\forall \entity''\in\Entities:\ \dist(\entity, \entity') \leq c+d$. 
Hence $\Entities$ $(c+d)$-matches the entity $\entity'$. \qed
\end{proof}
Hence, the matching entity $\entity^\ast$ does not $c$-match $\entity'$ for a small value of $c$, 
the adversary $\adv$ he will have a number of possibly matching entities that are similarly likely to match $\entity'$.

We get the same result if not the whole collection $\Entities$ is convergent, but if there exists a subset of convergent entities that allows the target to remain anonymous.
\begin{corollary}
Let \entity' be $(k,d)$-anonymous in $\Entities$. If \entity' $c$-matches an entity \entity then there is a subset $\Entities' \subseteq \Entities$ of size at least $k$ which $(c+d)$-matches \entity. 
\end{corollary}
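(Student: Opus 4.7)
The plan is to directly invoke Lemma~\ref{lem:convergent_matching} on the convergent subset witnessing $(k,d)$-anonymity. By Definition~\ref{def:kd-anonymity}, the hypothesis that $\entity'$ is $(k,d)$-anonymous in $\Entities$ unpacks to the existence of a subset $\Entities^\star \subseteq \Entities$ with $\entity' \in \Entities^\star$, $|\Entities^\star| \geq k$, and $\Entities^\star$ being $d$-convergent for $\entity'$. This $\Entities^\star$ is precisely the candidate for the subset whose existence the corollary asserts.

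Next, I would apply Lemma~\ref{lem:convergent_matching} with the role of its ``$\entity^\ast$'' played by our $\entity'$ and the role of its external reference played by our $\entity$. Both preconditions are in hand: $\Entities^\star$ is $d$-convergent for $\entity'$ (from the anonymity hypothesis), and $\entity'$ $c$-matches $\entity$ (by assumption). Invoking the lemma yields that $\Entities^\star$ $(c+d)$-matches $\entity$; combining this with the cardinality bound $|\Entities^\star| \geq k$ already obtained from the anonymity hypothesis gives exactly the conclusion of the corollary.

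There is no substantive obstacle here—the statement is a corollary in the strict sense, obtained by chaining the definition of $(k,d)$-anonymity with Lemma~\ref{lem:convergent_matching} without further estimation. The only mild care needed is a notational reconciliation, since Lemma~\ref{lem:convergent_matching} names the convergence center $\entity^\ast$ and the matched entity $\entity'$, whereas the corollary uses $\entity'$ for the anonymous entity and $\entity$ for the one it matches; implicitly we also rely on symmetry of $\dist$, which is available because $\dist$ is required to be a metric, to freely swap the orientation of the $c$-matching relation when aligning the corollary's and the lemma's conventions.
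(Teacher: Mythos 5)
Your proof is correct and takes exactly the route the paper intends: the paper states this corollary without a separate proof, noting only that ``we get the same result if \ldots there exists a subset of convergent entities,'' which is precisely your argument of unpacking the $(k,d)$-anonymity witness and applying Lemma~\ref{lem:convergent_matching} to it. Your attention to the notational swap and the symmetry of $\dist$ is appropriate and introduces no gap.
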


%In Section~\ref{section:identity} we explore how these properties can help us in mitigating the risk of identity disclosure, a problem we introduced in Section~\ref{section:adversary}. 
%
%\paragraph{Linkability Across Collections}
%A significant draw of the Internet is the ability to communicate in different OSNs under different pseudonyms without having these pseudonyms connected to the same person. An adversary can, however, use information disseminated under each pseudonym to link these pseudonyms. The $d$-convergence model we introduced above can also be used to deal with this case.
%
%Consider a collection $\Entities$ of entities and entity \entity which has a pseudonymous entity \entity' corresponding to itself in another collection $\Entities$. We now want to know whether \entity' 
%can be identified as \entity's pseudonymous entity in $\Entities'$. Taking the same approach as above, we use the notion of $c$-matching in order to find entities in $\Entities'$ that are 
%similar to $\entity$; and, similarly, if \entity' is $(k,d)$-anonymous in $\Entities'$, we have to $c$-match \entity' to \entity for a very high matching value $c$ in order to uniquely identify \entity'. Overall, this again boils down to the risk of identity disclosure that we will examine further below. 

% \input{mixture_models_basics}
%!TEX root = main.tex
\subsection{Identity Disclosure}\label{section:identity}
%We now provide an assessment for the likelihood that an adversary successfully links matching entities.
%The notion of $(k,d)$-anonymity we introduced in Section~\ref{section:definitions} intuitively means that an entity is able to hide among at least $k$ other entities with a similarity of at least $d$. This ideally implies that an adversary is not able, or at least is severely hindered, in uniquely identifying an entity in a anonymous subset. In the following we bound the probability with which an adversary can successfully identify the unique entity $\entity$ from a collection of entities $\Entities = \{\entity_1, \ldots, \entity_n\}$ that corresponds to an outside target entity $\entity_T$. 
We assume that the adversary uses the similarity of the candidate entities to his target entity $\entity'$ to make his decision. The likelihood that the adversary chooses a specific entity $\entity^\ast$ then is 
the relative magnitude of $\dist(\entity^\ast,\entity)$, i.e. 
\[Pr[\Adv\ \text{chooses}\ \entity^\ast] = 1 - \frac{\dist(\entity^\ast, \entity')}{\sum_{\entity\in\Entities}\dist(\entity,\entity')}.\]
%which we can also understand as the adversary's confidence in choosing \entity as the matching entity to $\entity_T$.
We can now bound the likelihood with which a specific entity $\entity^\ast$ would be chosen by the adversary if 
$\entity^\ast$ is $(k,d)$-anonymous.
\begin{theorem}\label{thm:identity}
Let the matching entity $\entity^\ast$ of the entity $\entity'$ in the collection $\Entities = \{\entity_1,\ldots, \entity_n\}$ be $(k,d)$-anonymous in $\Entities$. 
Furthermore let $\entity^\ast$ $c$-match  $\entity'$. Then an adversary $\adv = (b,\emptyset)$ with uniform prior belief $b$ and with empty world knowledge that only observes the similarity of entities links the entity $\entity^\ast$ to $\entity'$ with a likelihood of at most $t \leq 1- \frac{c}{c+(k-1)(c+d)}$. 
\end{theorem}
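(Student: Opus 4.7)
My plan is to unfold the $(k,d)$-anonymity hypothesis to produce a large cluster of entities uniformly close to both $\entity^\ast$ and $\entity'$, and then substitute the resulting distance bounds directly into the stated probability expression.

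First, I would apply Definition~\ref{def:kd-anonymity} to obtain an anonymous subset $\Entities' \subseteq \Entities$ with $\entity^\ast \in \Entities'$, $|\Entities'| \geq k$, and $\dist(\theta_{\entity^\ast}, \theta_{\entity}) \leq d$ for every $\entity \in \Entities'$. Combining this $d$-convergence with the hypothesis that $\entity^\ast$ $c$-matches $\entity'$, Lemma~\ref{lem:convergent_matching} immediately yields that $\Entities'$ $(c+d)$-matches $\entity'$; in particular $\dist(\theta_{\entity}, \theta_{\entity'}) \leq c+d$ for every $\entity \in \Entities'$. This packages the anonymity assumption into a single uniform distance bound across the whole candidate pool.

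Second, I would substitute into the adversary's probability expression
\[
\Pr[\Adv\text{ chooses }\entity^\ast] = 1 - \frac{\dist(\theta_{\entity^\ast}, \theta_{\entity'})}{\sum_{\entity \in \Entities} \dist(\theta_{\entity}, \theta_{\entity'})}.
\]
Taking the denominator sum to run over the $k$-element anonymous pool $\Entities'$ certified above, the sum splits into the $\entity^\ast$-contribution (bounded by $c$ via $c$-matching) and $k-1$ remaining contributions (each bounded by $c+d$ via Lemma~\ref{lem:convergent_matching}), so the denominator is at most $c + (k-1)(c+d)$. Pairing this upper bound with the maximal value $c$ allowed for the numerator by the $c$-matching hypothesis then gives $\Pr \leq 1 - c/(c + (k-1)(c+d))$, which is the stated bound on $t$.

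The main obstacle is the direction of the inequalities: upper-bounding $\Pr$ requires \emph{lower}-bounding the ratio in the formula, which forces a coupled rather than independent treatment of numerator and denominator. The cleanest route is to observe that the uniform bounds provided by Lemma~\ref{lem:convergent_matching} exactly fit the denominator template $c + (k-1)(c+d)$ appearing in the target, and then to read the theorem as certifying the worst-case configuration of distances that the hypotheses admit -- namely, $\dist(\theta_{\entity^\ast}, \theta_{\entity'}) = c$ together with the $k-1$ surrounding distances saturating $c+d$. Verifying that this configuration is actually the worst case for the adversary (via a monotonicity check on the resulting rational expression in the saturated parameters) is the subtle step before the bound follows by plain substitution.
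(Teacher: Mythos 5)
Your proposal reproduces the paper's proof almost step for step: the paper likewise restricts attention to the $(k,d)$-anonymous subset $\Entities^\ast$ (arguing that discarding the wrong candidates outside $\Entities^\ast$ can only help the adversary), invokes Lemma~\ref{lem:convergent_matching} to bound $\dist(\entity,\entity')$ by $c+d$ for every $\entity\in\Entities^\ast$, and then evaluates the choice probability as $1-\frac{c}{c+S}$ with $S=\sum_{\entity\in\Entities^\ast\setminus\{\entity^\ast\}}\dist(\entity,\entity')\leq (k-1)(c+d)$. So structurally you are on exactly the paper's route.

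The one place you go beyond the paper is also the one place where your plan would stall. You correctly observe that the numerator and denominator must be treated in a coupled way, and you propose to finish by verifying that the saturated configuration ($\dist(\entity^\ast,\entity')=c$ with the other $k-1$ distances equal to $c+d$) is the adversary's worst case. Under the literal reading of $c$-matching as $\dist(\theta_{\entity^\ast},\theta_{\entity'})\leq c$, that verification fails: writing $N=\dist(\entity^\ast,\entity')$, the quantity $1-\frac{N}{N+S}=\frac{S}{N+S}$ is \emph{decreasing} in $N$, so over configurations with $N\leq c$ the adversary's probability is maximized as $N\to 0$, not at $N=c$, and the claimed bound is violated in that limit. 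The denominator half of the substitution is sound (enlarging $S$ only increases $\frac{S}{N+S}$), but the numerator half is valid only if $c$ is taken to be the exact distance $\dist(\theta_{\entity^\ast},\theta_{\entity'})$ rather than an upper bound on it. The paper's own proof performs this substitution silently, so your write-up is no less rigorous than the original; but the monotonicity check you sketch as the closing step would, if actually carried out, expose this gap rather than close it, and the honest repair is to state the theorem with $c$ equal to the matching distance (which is how the paper uses it in the evaluation).
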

\begin{proof}
Let $\Entities^{\ast}$ be the $(k,d)$ anonymous subset of $\entity^{\ast}$ in \Entities. Let $t^\ast$ be the likelihood of identifying $\entity^\ast$ from $\Entities^{\ast}$. 
Then clearly $t < t^\ast$ since we remove all possible, but wrong candidates in $\Entities \setminus \Entities^\ast$.

Since $\entity^\ast$ $c$-matches $\entity'$, by Lemma~\ref{lem:convergent_matching}, we can upper bound the distance of each entity in $\Entities^\ast$ to $\entity'$, 
i.e.,
\[\forall \entity \in \Entities^\ast: \dist(\entity, \entity') \leq c+d\]
\noindent
We can now bound $t^\ast$ as follows:
\begin{align*}
t^\ast &= Pr[\Adv\ \text{chooses}\ \entity] \\
 &= 1 - \frac{c}{c + (k-1)(\sum\limits_{\entity \in \Entities^\ast \setminus \{\entity^\ast \}} \dist(\entity, \entity') )} \leq 1 - \frac{c}{c + (k-1)(c+d)}
\end{align*}
\qed
\end{proof}
Theorem~\ref{thm:identity} shows that, as long as entities remain anonymous in a suitably large anonymous subset of a collection of entities, an adversary will have difficulty identifying them 
with high likelihood. Recalling our unlinkability definition from the beginning of the section, this result also implies that $\entity^\ast$ is $\sigma$-unlinkable for $\sigma = t$.
\begin{corollary}
Let the matching entity $\entity^\ast$ of the entity $\entity'$ in the collection $\Entities = \{\entity_1,\ldots, \entity_n\}$ be $(k,d)$-anonymous in \Entities. Then $\entity^\ast$ and $\entity'$ are $\sigma$-unlinkable for $\sigma = 1- \frac{c}{c+(k-1)(c+d)}$ against an adversary $\adv=(b,\emptyset)$ with uniform prior belief and empty world knowledge that only observes entity similarity.
\end{corollary}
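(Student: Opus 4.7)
The plan is to derive this corollary as a direct consequence of Theorem~\ref{thm:identity} by unfolding the definition of $\sigma$-unlinkability. First I would recall that by definition, $\entity^\ast$ and $\entity'$ are $\sigma$-unlinkable precisely when $\{\theta_{\entity^\ast}, \theta_{\entity'}\} \models_\sigma \{r_1, r_2\}$, where $r_1 = (\entity^\ast, \attribute_L)$ and $r_2 = (\entity', \attribute_L)$ state that $\entity^\ast$ and $\entity'$ belong to the same user. According to the privacy policy satisfaction definition, this amounts to requiring that the adversary's a-posteriori belief assigned to entity models asserting this link is at most $\sigma$.

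The second step is to identify the a-posteriori belief in the linkability attribute with the adversary's linking probability used in Theorem~\ref{thm:identity}. Because $\adv=(b,\emptyset)$ has a uniform prior belief and empty world knowledge, and only observes entity similarity, the only signal on which $\adv$ can update his belief about $\attribute_L$ is the distance $\dist(\entity,\entity')$ for each candidate $\entity \in \Entities$. Applying Bayesian inference under a uniform prior, the weight $\adv$ assigns to the hypothesis ``$\entity$ matches $\entity'$'' is exactly the normalized similarity expression appearing in Theorem~\ref{thm:identity}, namely $1 - \dist(\entity,\entity')/\sum_{\entity''} \dist(\entity'',\entity')$.

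Next I would invoke Theorem~\ref{thm:identity} directly: since $\entity^\ast$ is $(k,d)$-anonymous in $\Entities$ and $c$-matches $\entity'$, the theorem yields the bound $1 - \frac{c}{c+(k-1)(c+d)}$ on the probability that $\adv$ links $\entity^\ast$ to $\entity'$. This is precisely the bound on $\sum_{\theta \in \theta[\attribute_L = \text{matches}]} \overline{b}_{\entity^\ast}[\theta \mid \Obs, \emptyset]$, and by symmetry the same bound holds for $\entity'$. Both privacy requirements $r_1$ and $r_2$ are therefore $\sigma$-satisfied for $\sigma = 1 - \frac{c}{c+(k-1)(c+d)}$, which gives the claimed unlinkability.

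The step I expect to require most care is the middle one, namely justifying that the adversary's a-posteriori belief on the linkage attribute coincides with the choice probability used in Theorem~\ref{thm:identity}. Since the theorem was stated operationally in terms of the adversary ``choosing'' a candidate entity, one has to argue that under a uniform prior and no auxiliary world knowledge there is no better inference strategy than the similarity-proportional rule, and that the resulting weights define a valid Bayesian posterior over models consistent with $\attribute_L$. Everything else is bookkeeping over the definitions.
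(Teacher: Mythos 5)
Your proposal matches the paper's own treatment: the corollary is obtained as an immediate consequence of Theorem~\ref{thm:identity} by identifying the adversary's a-posteriori belief on the linkage attribute $\attribute_L$ with the similarity-proportional choice probability bounded in that theorem, and then reading off $\sigma$-satisfaction of the two privacy requirements. You are in fact more explicit than the paper about the one delicate step (that the operational ``choosing'' probability is the Bayesian posterior weight under a uniform prior and empty world knowledge), and you correctly supply the $c$-matching hypothesis that the corollary's statement uses implicitly.
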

In Section~\ref{section:data_evaluation} we present experiments that evaluate the anonymity and linkability of individuals in the Online Social Network Reddit, and measure how well they can be identified from among their peers.
%\input{databases}
%\input{transferability}
%!TEX root = main.tex
\subsection{Limitations}\label{section:limitations}
The quality of the assessment provided by the $d$-convergence model largely depends on adversarial prior belief we assume: in our results above, we assume an adversary without any prior knowledge. 
%If this holds true for our real adversary, these assessments automatically turn into lower bound guarantees on entity identifiablity and linkability.
%only hold true if we can be sure that we over-approximate adversarial knowledge. In our model we assume an exact representation of disseminated information in our statistical models, which in turn allows us to correctly make decisions about adversarial knowledge. 
In practice, however, the adversary might have access to prior beliefs that can help him in his decision making. Therefore, turning such assessments into meaningful estimates in practice requires a careful estimation of prior knowledge by, e.g., producing a more accurate profile model: the problem of comprehensive profile building for entities in an open setting is an open question that has been examined somewhat in the literature~\cite{Cali04,Cortis12,Chen12,Sharma12,Balduzzi10}, but on the whole still leaves a lot of space for future work.   
\\\\
This concludes the formal definitions of our $d$-convergence model. In the next sections, we instantiate it for identity disclosure risk analyses based on user-generated text-content and apply this instantiation to the OSN Reddit. 
%\input{convergence}
%\input{mixture_models}
%!TEX root = main.tex

% figure for data collection
% \input{figures_data_collection}

\section{Linkability Evaluation on Reddit}\label{section:experiment}
%We evaluate our privacy metric on a Reddit dataset collected in September 2014. This dataset consists of around 35 million comments from 50 thousand users in 40 thousand subreddits. From these comments, we build unigram-models for each user, each subreddit and reddit as a whole, computing unigram and bigram frequencies.
%We utilize these unigram-models for several experimental evaluations:
%\begin{enumerate}
%\item We determine size and convergence of private subsets for users in prominent subreddits
%\item We match users across subreddits and determine precision and recall of this simple matching mechanism.
%\item We determine the number of user profiles with identifying unigrams, i.e. unigrams that a specific user used more frequently than other users in similar subreddits
%\item we evaluate the changes in private user-sets when we hide these critical, identifying unigrams 
%\end{enumerate}
While the main focus of this paper is to present the actual privacy model as such, the following experiments are meant to provide first insights into the application of our framework, without taking overly complex adversarial capabilities
into account. The evaluation can easily be extended to a more refined model of an adversary
without conceptual difficulties.
%In particular, we examine the problem of identity disclosure across different online social networks as discussed in Section~\ref{section:identity} and assess the privacy of individuals in online social networks against an adversary without additional a-priori knowledge. To build up the statistical models for the user profiles, we rely on a simple word unigram model, as described in Section~\ref{section:instantiation}.% and our full version~\cite{full_version}.

We first articulate the goals of this evaluation, and then, secondly, describe the data collection process, followed by defining the instantiation 
of the general framework we use for our evaluation in the third step.
Fourth, we introduce the necessary processing steps on our dataset, before we finally discuss the results of our evaluation.% and finally, we conclude that our model is applicable in practice.

% The experiments conducted in this section include the computation of:
% \begin{enumerate}
% \item unigram frequencies for each profile,
% \item the statistical models for each profile,
% \item the similarities between the different profiles,
% \item and the $(k,d)$-private subsets.
% \end{enumerate}
%!TEX root = main.tex

\subsection{Goals}\label{section:hypotheses}
In our evaluation, we aim at validating our model by conducting two basic experiments.
First, we want to empirically show that, our model instantiation yields a suitable abstraction of real users for reasoning about their privacy.
To this end, profiles of the same user should be more similar to each other (less distant) than profiles from different users.

Second, we want to empirically show that a larger anonymous subset makes it more difficult for an adversary to correctly link the profile. Thereby, we inspect whether anonymous subsets provide a practical estimate of a profile's anonymity.

%Intuitively, if there are many other profiles nearby, the anonymous subset is large and the adversary should have difficulties to tell the profiles in the anonymous subset apart.
Given profiles with anonymous subsets of similar size, we determine the percentage of profiles which the adversary can match within the top $k$ results, i.e., given a source profile, the adversary computes the top $k$ most similar (less distant) profiles in the other subreddit. We denote this percentage by \emph{precision@$k$} and correlate it to the size of the anonymous subsets.

We fix the convergence of the anonymous subsets to be equal to the matching distance between two corresponding profiles. Our intuition is that, this way, the anonymous subset captures most of the profiles an adversary could potentially consider matching.
%!TEX root = main.tex
\subsection{Data-Collection}\label{section:data_collection}
For the empirical evaluation of our privacy model, we use the online social network Reddit~\cite{reddit} that was founded in 2005 and constitutes one of the largest discussion and information sharing platforms in use today.
On Reddit, users share and discuss topics in a vast array of topical subreddits that collect all topics belonging to one general area; e.g. there are subreddits for world news, tv series, sports, food, gaming and many others. Each subreddit contains so-called submissions, i.e., user-generated content that can be commented on by other users.

To have a ground truth for our evaluation, we require profiles of the same user same user across different OSNs to be linked. Fortunately, Reddit's structure provides an inherent mechanism to deal with this requirement.
Instead of considering Reddit as a single OSN, we treat each subreddit as its own OSN. Since users are identified through the same pseudonym in all of those subreddits, they remain linkable across
the subreddits' boundaries. Hence our analysis has the required ground truth. The adversary we simulate, however, is only provided with the information available in the context of each subreddit and thus can only try to match profiles across subreddits. Ground truth in the end allows us to verify the correctness of his match.

% To elaborate more on the idea of treating subreddits as different OSNs, we describe an example taken from our Reddit dataset.
% Consider the subreddits \emph{Lost} and \emph{TipOfMyTongue} to which we will refer throughout the experimental part. \emph{Lost} is about a TV series, while
% \emph{TipOfMyTongue} is about the failure to remember a word from memory. Since these two are different topics, their topic related language should differ. For example, we expect \emph{Lost}'s vocabulary to contain characters and places that occur in the series and \emph{TipOfMyTongue}'s vocabulary to contain anything in the context of remembering something.
% In general, the more independent the topics are, the more independent should also be their vocabulary (see also Subsection \ref{sec:unigram_frequencies} about the computation of unigram frequencies). On the contrary, each user has its own vocabulary that, according to our assumption, could identify the user even across different topic specific vocabulary.

To build up our dataset, we built a crawler using Reddit's API to collect comments. Recall that subreddits contain submissions that, in turn, are commented by the users.
For our crawler, we focused on the large amount of comments because they contain a lot of text and thus are best suitable for computing the unigram models. %instead of submissions because there is a large number of comments per submission and because they typically contain more text. Thus, comments are best suitable for computing the unigram models.

Our crawler operates in two steps that are repeatedly executed over time. During the whole crawling process, it maintains a list of already processed users. In the first step, our crawler collects a list of the overall newest comments on Reddit from Reddit's API and inserts these comments into our dataset. In the second step, for each author of these comments who has not been processed yet, the crawler also collects and inserts her latest $1,000$ comments into our dataset. Then, it updates the list of processed users. The number of $1,000$ comments per user, is a restriction of Reddit's API.

In total, during the whole September 2014, we collected more than $40$ million comments from over $44,000$ subreddits. The comments were written by about $81,000$ different users which results in more than $2.75$ million different profiles.

The whole dataset is stored in an anonymized form in a MySQL database and is available upon request.

%\input{measures/data_processing}
%!TEX root = main.tex
\subsection{Model Instantiation}\label{section:instantiation}
On Reddit, users only interact with each other by by posting comments to text of link submissions. Reddit therefore does not allow us to exploit features found in other social networks, such as friend links or other static data about each user. On the other hand, this provides us with the opportunity to evaluate the 
linkability model introduced in Section~\ref{section:linkability_model} based dynamic, user-generated content, in this case user-generated text content.

Since we only consider text content, we instantiate the general model from the previous sections with an unigram model, where each attribute is a word unigram, an its value is the frequency with which the unigram appears in the profiles comments. Such unigram models have succesfully been used in the past to characterize the information within text content and to correlate users across different online platforms~\cite{mywww,MishariT12}.
\begin{definition}[Unigram Model]
Let $\vocabulary$ be a finite vocabulary. The \emph{unigram model} $\theta_P = {p_i}$ of a profile is a set of \emph{frequencies} $p_i\in[0,\ldots,1]$ with which each unigram $w_i\in\vocabulary$ appears in the profile $P$. Each frequency $p_i$ is determined by
\[p_i = \frac{\stringCount(w_i,P)}{\sum_{w\in\vocabulary}{\stringCount(w,P)}}\]  
\end{definition}
Since the unigram model essentially constitutes a probability distribution, we instantiate our distance metric $\dist$ with the Jensen-Shannon divergence~\cite{Endres03}. The Jensen-Shannon divergence is a symmetric extension of the Kullback-Leiber divergence has been shown to be successful in many related information retrieval scenarios.
\begin{definition}\label{def:divergence}
Let $P$ and $Q$ be two statistical models over a discrete space $\Omega$. 
%which are absolutely continuous to each other, i.e. $\stringSup(P) = \stringSup(Q)$. 
The Jensen-Shannon divergence is defined by 
	\[\div_{\mathsf{JS}} = \frac{1}{2}\div_{\mathsf{KL}}(P||M) + \frac{1}{2}\div_{\mathsf{KL}}(Q||M) \]
where $\div_{\mathsf{KL}}$ is the Kullback-Leibler divergence  
\[\div_{\mathsf{KL}}(P||Q) = \sum_{\omega\in\Omega} log\left(\frac{P(\omega)}{Q(\omega)}\right)P(\omega)\]
and $M$ is the averaged distribution $M = \frac{1}{2}(P+Q)$.
\end{definition}
In the following, we will use the square-root of the Jensen-Shannon divergence, constituting a metric, as our distance measure, i.e., $\dist=\sqrt{\div_{\mathsf{JS}}}$.
%!TEX root = main.tex

% figure from evaluation
%\input{figures_data_evaluation1}

\subsection{Data-Processing}\label{section:data_processing}
The evaluation on our dataset is divided into sequentially performed computation steps, which include
the normalization of all comments, the computation of unigram models for each profile, a filtering of our dataset to keep the evaluation tractable, the computation of profile distances and the computation of $(k,d)$-anonymous subsets.

\paragraph{Normalizing Comments.}
Unstructured, heterogeneous data, as in our case, may contain a variety of valuable information about a user's behavior, e.g., including formatting and punctuation.
Although we could transform these into attributes, we do not consider them here for the sake of simplicity.
% in our evaluation, because the focus of our experiments is on word unigrams as described in Section~\ref{section:unigram_model}.

In order to get a clean representation to apply the unigram model on, we apply various normalization steps, including transformation to lower case, the removal of Reddit formatting and punctuation except for smilies.
Moreover, we apply a encoding specific normalization, replace URLs by their hostnames and shorten repeated characters in words like \reddit{cooool} to a maximum of three. % to capture the usage of URLs in a more general way. %, e.g., if a user often refers to some URL like \reddit{www.wikipedia.org}.
Finally, we also filter out a list of $597$ stopwords from the comments.
%!TEX root = main.tex
Therefore, we perform six different preprocessing steps on the data, which we describe in more detail in the following.

\begin{enumerate}
\item \textbf{Convert to lower case letters:} In our statistical language models, we do not want to differentiate between capitalized and lowercased occurrences
    of words. Therefore, we convert the whole comment into lower case.

\item \textbf{Remove Reddit formatting:} Reddit allows users to use a wide range of formatting modifiers that we divide into two basic categories: 
    formatting modifiers that influence the typography and the layout of the comment, and formatting modifiers that include external resources
    into a comment. The first kind of modifier, named layout modifiers, is stripped off the comment, while leaving the plain text.
    The second kind of modifier, called embedding modifiers, is removed from the comment completely.

    One example for a layout modifier is the asterisk: When placing an asterisk both in front and behind some text, e.g., \reddit{*text*},
    this text will be displayed in italics, e.g., \reddit{\textit{text}}.
    Our implementation removes these enclosing asterisks, because they are not valuable for computing statistical language models for $n$-grams and only
    affect the layout. Similarly, we also remove other layout modifiers such as table layouts, list layouts and URL formatting in a way that only the important information remains.

    A simple example for embedding modifiers are inline code blocks: Users can embed arbitrary code snippets into their comments using the \reddit{`} modifier.
    Since these code blocks do not belong to the natural language part of the comment and only embed a kind of external resource, we remove them completely.
    In addition to code blocks, the category of embedding modifiers also includes quotes of other comments.

\item \textbf{Remove stacked diacritics:} In our dataset, we have seen that diacritics are often misused. Since Reddit uses Unicode as its character encoding,
    users can create their own characters by arbitrarily stacking diacritics on top of them. % maybe example?
    To avoid this kind of unwanted characters, we first normalize the comment by utilizing the unicode character composition, which tries to combine each letter and
    its diacritics into a single precombined character. Secondly, we remove all remaining diacritic symbols from the comment. While this process preserves
    most of the normal use of diacritics, it is able to remove all unwanted diacritics.

\item \textbf{Replace URLs by their hostname:} Generally, a URL is very specific and a user often does not include the exact same URL in different comments.
    However, it is much more common that a user includes different URLs that all belong to the same hostname, e.g., \reddit{www.mypage.com}.
    Since our statistical language models should represent the expected behavior of a user in terms of used words (including URLs), we restrict all URLs to their hostnames.

\item \textbf{Remove punctuation:} Most of the punctuation belongs to the sentence structure and, thus, should not a part of our statistical language
    models. Therefore, we remove all punctuation except for the punctuation inside URLs and smilies. We do not remove the smilies, because people are using
    them in a similar role as words to enrich their sentences: Every person has her own subset of smilies that she typically uses. To keep the smilies in the comment,
    we maintain a list of $153$ different smilies that will not be removed from the comment.

\item \textbf{Remove duplicated characters:} In the internet, people often duplicate characters in a word to add emotional nuances to their writing, e.g.,
    \reddit{cooooo\-oooool}. But sometimes the number of reduplicated characters varies, even if the same emotion should be expressed. Thus, we reduce
    the number of duplicated characters to a maximum of $3$, e.g., \reddit{coool}. In practice, this truncation allows us to differentiate between
    the standard use of a word and the emotional variation of it, while it does not depend on the actual number of duplicated characters.
\end{enumerate}
% Inspired from methods in information retrieval, we also filter out so-called stopwords. Stopwords are very frequent words that occur in almost every text, e.g., \reddit{and}, \reddit{the}, \reddit{a}.
% Since those words are common in almost every user's vocabulary, stopwords generally do not carry a lot information~\cite{wilbur92}, nor do they belong to a user's specific vocabulary.
% Thus, we explicitly exclude a list of 597 of these stopwords from the comments. The list also includes single characters like \reddit{t} and \reddit{s} that result from our normalization steps. %, because our previous normalization split words like \reddit{isn't}. %, which results in \reddit{isn} and \reddit{t}.

%!TEX root = main.tex
\begin{table*}[t]
  \begin{center}
    \begin{tabular}{| l || c | c || c | c || c | c |}
    \hline
    Top & \multicolumn{2}{c||}{Reddit} & \multicolumn{2}{c||}{subreddit: Lost} & \multicolumn{2}{c|}{subreddit: TipOfMyTongue} \\
    \hline
    & Unigram & Frequency & Unigram & Frequency & Unigram & Frequency \\
    \hline
    \hline
    1. & people & 4,127,820 & island & 832 & www.youtube.com & 3663 \\
    2. & time & 2,814,841 & show & 750 & song & 1,542 \\
    3. & good & 2,710,665 & lost & 653 & remember & 1,261 \\
    4. & gt & 2,444,240 & time & 580 & en.wikipedia.org & 1,100 \\
    5. & game & 1,958,850 & people & 527 & sounds & 1,007 \\
    6. & pretty & 1,422,640 & locke & 494 & solved & 924 \\
    7. & 2 & 1,413,118 & season & 431 & movie & 918 \\
    8. & lot & 1,385,167 & jacob & 429 & find & 829 \\
    9. & work & 1,352,292 & mib & 372 & :) & 786 \\
    10. & 1 & 1,184,029 & jack & 310 & game & 725 \\
    11. & 3 & 1,124,503 & episode & 280 & time & 678 \\
    12. & great & 1,070,299 & ben & 255 & thinking & 633 \\
    13. & point & 1,063,239 & good & 250 & good & 633 \\
    14. & play & 1,060,985 & monster & 237 & www.imdb.com & 584 \\
    15. & years & 1,032,270 & lot & 220 & video & 583 \\
    16. & bad & 1,008,607 & gt & 182 & pretty & 570 \\
    17. & day & 989,180 & character & 165 & youtu.be & 569 \\
    18. & love & 988,567 & walt & 163 & mark & 548 \\
    19. & find & 987,171 & man & 162 & edit & 540 \\
    20. & shit & 976,928 & dharma & 162 & post & 519 \\
    \hline
    \end{tabular}
  \end{center}
  \caption{Top 20 unigrams of Reddit and two sample subreddits Lost and TipOfMyTongue.}
  \label{tbl:unigrams}
\end{table*}
\paragraph{Computing Unigram Models.}
\label{sec:unigram_frequencies}
From the normalized data, we compute the unigram frequencies for each comment. Recall that our dataset consists of 
many subreddits that each form their own OSN. Thus, we aggregate the corresponding unigram frequencies per profile,
per subreddit, and for Reddit as a whole. 
Using this data, we compute the word unigram frequencies for each comment as described in Section~\ref{section:instantiation}.

Since a subreddit collects submissions and comments to a single topic, we expect the unigrams to reflect its topic specific language. Indeed, the 20 most frequently used unigrams of a subreddit demonstrate that the language adapts to the topic.
%!TEX root = main.tex
%\subsection{Topic Specific Unigrams}
%\label{appendix:unigrams}
%As each subreddit deals with its own topic, the language used in a subreddit highly reflects this topic.
As an example, we show the top 20 unigrams (excluding stopwords) of Reddit and two sample subreddits \emph{Lost} and \emph{TipOfMyTongue} in Table \ref{tbl:unigrams}.
As expected, there are subreddit specific unigrams that occur more often in the context of one subreddit than in the context of any other subreddit. For example, the subreddit \emph{Lost} deals with a TV series that is about the survivors of a plane crash and its aftermath on an island. Unsurprisingly, the word \emph{island} is the top unigram in this subreddit. In contrast, the subreddit \emph{TipOfMyTongue} deals with the failure to remember a word from memory and, thus, has the word \emph{remember} in the list of its top three unigrams. %It is worth noting that this topic specific language supports our choice of Reddit for our dataset.
%\input{unigrams_table}

% Since a subreddit collects submissions and comments to a single topic, we expect the unigrams to reflect its topic specific language. Indeed, the 20 most frequently used unigrams of a subreddit demonstrate that the language adapts to the topic.
% \input{unigrams}

\paragraph{Filtering the Dataset.}
%During our evaluation process, we found that our dataset is too large to compute the pairwise distance of all profiles in a timely manner since it would result in more than $7.57$ trillion distance measurements.
%Consequently, we decided to filter the dataset to only compute distances of \emph{interesting} profiles.
To reduce the required amount of computations we restrict ourselves to \emph{interesting profiles}. We define an interesting profile as one that contains at least $100$ comments and that belongs to a subreddit with at least $100$ profiles.
Additionally, we dropped the three largest subreddits from our dataset to speed up the computation.

In conclusion, this filtering results in $58,091$ different profiles that belong to $37,935$ different users in $1,930$ different subreddits.
%In order to compare the distribution of users and comments of the unfiltered and the filtered dataset, both distributions are plotted in the figures of Section~\ref{section:data_collection}.
%In general, the distributions are very similar to those of the unfiltered dataset except for an overall decrease.
% Figure~\ref{fig:comments_per_user} shows the number of comments per user and clearly indicates that we filtered users with less than $100$ comments. The decrease of users with about $1,000$ comments arises from dropping the three largest subreddits.
% Figure~\ref{fig:users_per_subreddit} shows the number of users per subreddit. Due to our filtering, there are less subreddits that contain at most $50$ users and no subreddits that contain more than $1,800$ users.
% Figure~\ref{fig:subreddits_per_user} shows the number of different subreddits per user. While the shape of the figure is essentially the same as in the corresponding figure of our unfiltered dataset, the maximum number of different subreddits a user has comments in drops significantly. This drop, however, is caused by filtering out small subreddits that, for example, only contain one comment of one user who has several comments in other subreddits.
% So, only those users that have interesting profiles in multiple subreddits are still present in more than one subreddits of the filtered dataset.

\paragraph{Distances Within and Across Subreddits.}
Next, we compute the pairwise distance within and across subreddits using our model instantiation.
% Since there may occur unigrams that are only part of one of the profiles, we smooth the
% profiles using mixture models of the respective profile and $\hat\theta_{\OSN_s}$: $\theta_{P_i} = \lambda \hat\theta_{P_i} + (1-\lambda)\hat\theta_{\OSN_s}$ for $i \in \{1,2\}$.
% % \begin{equation*}
% % \theta_{P_i} = \lambda \hat\theta_{P_i} + (1-\lambda)\hat\theta_{\OSN_s} \text{ for } i \in \{1,2\} \text{.}
% % \end{equation*}
% In our evaluation, we decided to set $\lambda=0.9$, since our focus is on the user (and thus profile) specific language.
% A higher $\lambda$ parameter sets the focus on the statistical model of the profile instead of the subreddit's model.
% % By choosing $\lambda<1$, we ensure the absolute continuity of the resulting mixture model, i.e., every word of the subreddit vocabulary has a non-zero probability.
%Naturally, computing the distance only makes sense for two different profiles, since the distance of a profile and itself is always $1$.
Excluding the distance of profiles to themselves, the minimal, maximal and average distance of two profiles within subreddits in our dataset are approximately $0.12$, $1$ and $0.79$ respectively.
Across subreddits, the minimal, maximal and average distance of two profiles are approximately $0.1$, $1$ and $0.85$ respectively.

\paragraph{Anonymous Subsets.}
Utilizing the distances within subreddits, we can determine the anonymous subsets for each profile in a subreddit. 
More precisely, we compute the anonymous subset for each pair of profiles from the same user. We set the convergence $d$ to the matching distance between both profiles and determine the size of the resulting anonymous subset.
%!TEX root = main.tex
\begin{figure*}[!t]
    \centering
        \includegraphics[width=0.8\textwidth]{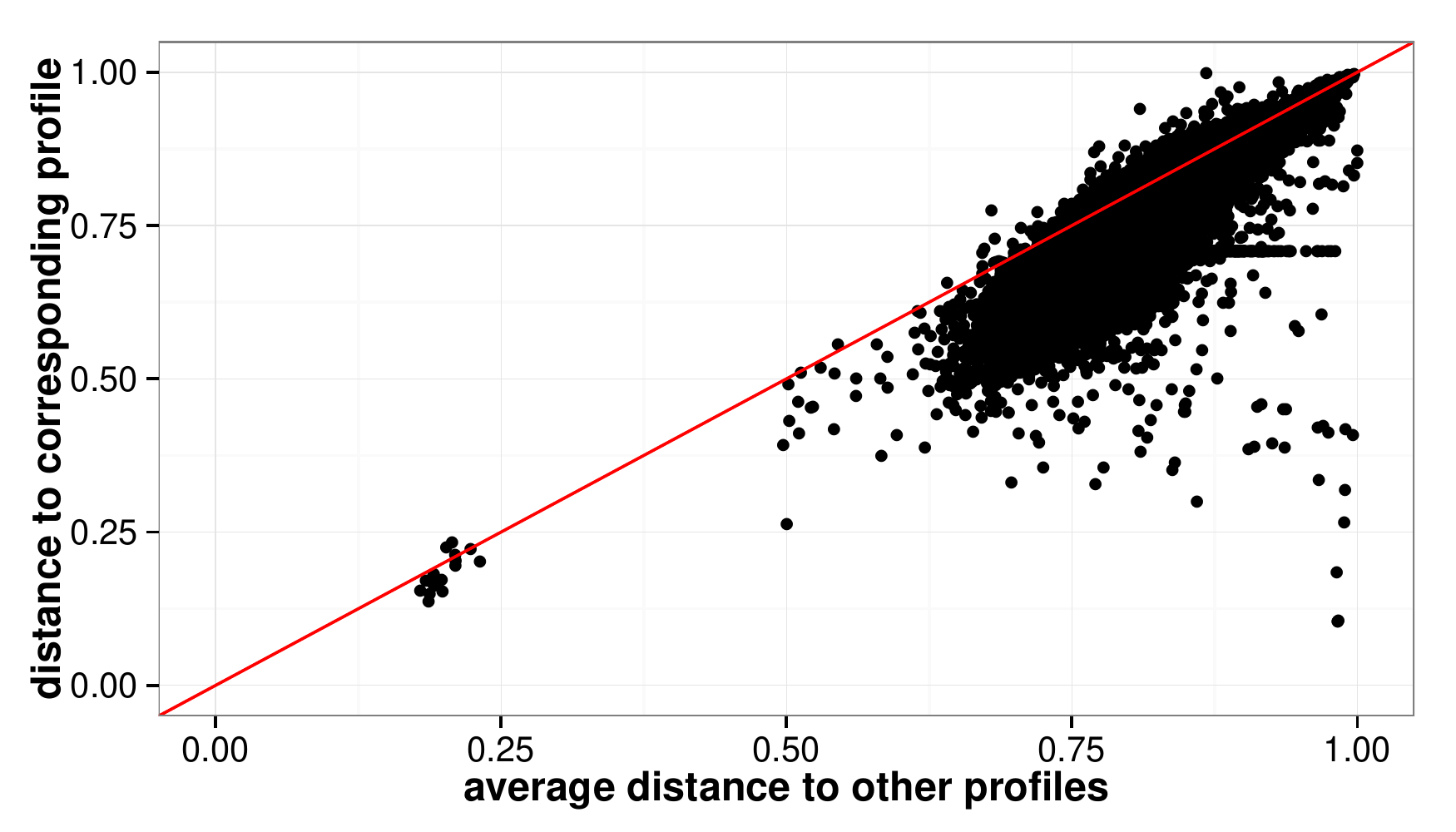}
        \caption{The average distance between a profile in subreddit $s$ and all profiles in  $s'$ versus the matching distance between the profile and its correspondence in $s'$.}
        \label{fig:avg_sim_matching_value}
\end{figure*}
\subsection{Evaluation and Discussion}\label{section:data_evaluation}
In this subsection, we inspect and interpret the results of our experiments with regard to our aforementioned goals.
Therefore, we first start by giving evidence that our approach indeed provides a suitable abstraction of real users for reasoning about their privacy.

To this end, we compare the distance of matching profiles to the average distance of non-matching profiles. In particular, for each pair of profiles from the same user in subreddits $s$ and $s'$, we plot the average distance from the profile in $s$ to the non-matching profiles in $s'$ in relation to the distance to the matching profile in $s'$ in Figure~\ref{fig:avg_sim_matching_value}. The red line denotes the function $y=x$ and divides the figure into two parts: if a point lies below the line through the origin, the corresponding profiles match better than the average of the remaining profiles. Since the vast majority of datapoints is located below the line, we can conclude that profiles of the same user match better than profiles of different users.
%\medskip
%\noindent
%\emph{Conclusion:} The distribution confirms our unigram model to be a suitable abstraction.
%\medskip
%\noindent Next, we present the evaluation of our hypotheses.
%\input{theory/figures_data_evaluation1}
\begin{figure*}[!t]
    \centering
        \includegraphics[width=0.8\textwidth]{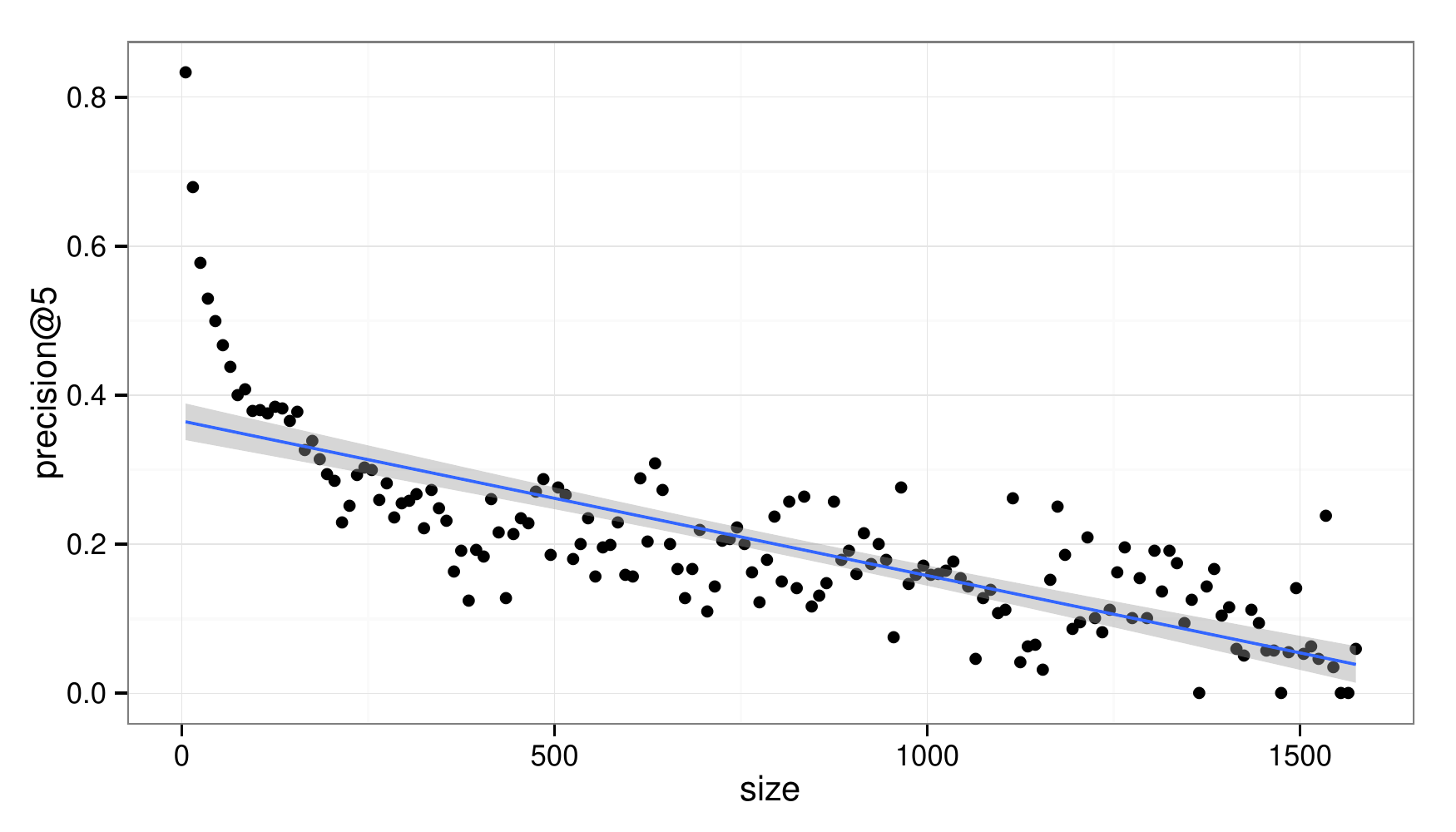}
        \caption{The anonymous subset size correlated to the precision an adversary has if considering the top $5$ profiles as matching.}
        \label{fig:precision5}
%\vspace*{-1.5em}
\end{figure*}

Our second goal aimed at showing that anonymous subsets indeed can be used to reason about the users' privacy.
Therefore, we investigate the chances of an adversary to find a profile of the same user within the top $k$ matches and relate its chance to the size of the profile's anonymous subset.
More precisely, given multiple target profiles with similar anonymous subset sizes, we determine the, so called, precision@$k$, i.e., the ratio of target profiles that occur in the top $k$ ranked matches (by ascending distance from the source profiles).
We relate this precision@$k$ to the anonymous subset sizes with a convergence $d$ set to the distance between the source and target profiles, and we group the anonymous subset sizes in bins of size $10$.

In our evaluation, we considered $k \in \{1,5,10,20\}$, which all yield very similar results. Exemplarily, we correlate the aforementioned measures for $k=5$ in Figure~\ref{fig:precision5}, clearly showing that an increasing anonymous subset size correlates with an increasing uncertainty -- i.e., decreasing precision -- for the adversary. 
%\medskip
%\noindent
%\emph{Conclusion:} These findings let us conclude that anonymous subsets indeed provide a practical estimate on a profile's anonymity.
%\input{experiment_conclusion}

%\input{countermeasures}
%!TEX root = main.tex
\section{Conclusion and Future Work}\label{section:conclusion}
We presented a user-centric privacy framework for reasoning about privacy in open web settings. In our formalization, we address the essential challenges of privacy in open settings: we defined a comprehensive data model that can deal with the unstructured dissemination of heterogeneous information, and we derived the sensitivity of information from user-specified and context-sensitive privacy requirements. We showed that, in this formalization of privacy in open settings, hard security 
guarantees in the sense of Differential Privacy are impossible to achieve.     
We then instantiated the general framework to reason about the identity disclosure problem. 
The technical core of our identity disclosure model 
is the new notion of $(k,d)$-anonymity that assesses the anonymity of entities based on their similarity to other entities within the 
same community. 
We applied this instantiation to a dataset of 15 million user-generated text entries  collected from the Online Social Network Reddit and showed that our framework is suited for the assessment of linkablity threats in Online Social Networks.

% In a second step, we extended the linkability model we derived from general privacy framework, and provided the foundations for 
% comprehensively assessing the effectiveness of countermeasures against authorship recognition.
% Central to this formalization is the notion of \emph{gain} with which we quantify how well a countermeasure
% achieves reduces the significance of identifying writing style features. We evaluate this formalization on the Extended-Brennan-Greenstadt 
% corpus~\cite{Brennan09,Brennan12}. In our evaluation we follow a comprehensive experimental methodology
% we also introduce in this work, structuring the evaluation process and allowing for an easy extension.
% %: this methodology structures the evaluation process and allows for an extension of our evaluation to a number of other countermeasures.
% We then evaluate four different countermeasures, one simple and three optimizing, and their combinations and
% discuss the reduction regarding feature importance they achieved.

As far as future work is concerned, many directions are highly promising.
First, our general framework only provides a static view on privacy in open settings. Information dissemination on the Internet, however, is, in particular, characterized by its highly dynamic nature. Extending the model presented in this paper with a suitable transition system to capture user actions might lead to powerful system for monitoring privacy risks in dynamically changing, open settings.
Second, information presented in Online Social Networks is often highly time-sensitive, e.g., shared information is often only valid for a certain period of time, and personal facts can change over time. Explicitly including timing information in our entity model will hence further increase the accuracy of the entity models derived from empirical evidence.
Finally, our privacy model is well-suited for the evaluation of protection mechanisms for very specific privacy requirements, and new such mechanisms with provable guarantees against restricted adversaries can be developed.
On the long run, we pursue the vision of providing the formal foundations for comprehensive, trustworthy privacy assessments and, ultimately, for developing user-friendly privacy assessment tools.

%%%%%% Bibliography
\bibliographystyle{plain}
\bibliography{main.bbl}

%%%%%% Appendix
%\input{countermeasures/appendix}
\end{document}